\documentclass[leqno,12pt]{article}

\usepackage{mathrsfs}  
\usepackage{amsmath}
\usepackage{amssymb}          
\usepackage{graphicx}               
\usepackage{latexsym}              
\usepackage{amsfonts}  
\usepackage{amsthm}

\newtheorem{theorem}{Theorem}      
\newtheorem{lemma}{Lemma}     
\newtheorem{definition}{Definition} 
\newtheorem{corollary}{Corollary}

\newtheorem{proposition}{Proposition}
\DeclareMathOperator{\sech}{sech}

\newcommand{\f}{\frac}

\numberwithin{equation}{section}
\numberwithin{theorem}{section}
\numberwithin{lemma}{section}
\numberwithin{definition}{section}
\numberwithin{corollary}{section}

\begin{document}
\centerline{{\bf \large Maximal Densities}}
\vspace{.1in}
\centerline{{\bf \large of Finite-Gap Solutions}}
\vspace{.1in}
\centerline{{\bf \large of the Sine-Gordon Equation}}

\vspace{.2cm}
\centerline{{Otis C. Wright, III}}
\centerline{{\em School of Science and Mathematics}}
\centerline{{\em Cedarville University}}
\centerline{{\em 251 N. Main St.}}
\centerline{{\em Cedarville, Ohio 45314}}
\centerline{E-mail: wrighto@cedarville.edu}
\centerline{{\today}}

\vspace{.2cm}

\centerline{{\bf \large Abstract}}
We establish a  sharp upper bound on the densities of finite-gap solutions of the sine-Gordon equation. The bound is derived directly from the finite-dimensional hierarchy, without explicit integration of the finite-gap solutions. The maximal density is determined by the roots of the invariant polynomial. An analogous sharp upper bound is established for a bounded class of finite-gap solutions of the sinh-Gordon equation. 


\section{Introduction}
\label{introduction}

The sine-Gordon equation 
\begin{equation}
\phi_{xt} =  \sin \phi,
\label{sG}
\end{equation}
where $\phi = \phi (x,t)$ is a real-valued wave profile, is one of the most important completely integrable nonlinear wave equations in mathematical physics. A closely related equation is the sinh-Gordon  equation,
\begin{equation}
\phi_{xt} = \sinh \phi.
\label{shG}
\end{equation}
Both equations arise in  differential geometry, relativistic field theory, nonlinear optics, Josephson junctions, and the theory of surfaces of constant curvature~\cite{cost 78, ablo 11, cuev 14}. Like the Korteweg-de Vries (KdV), modified Korteweg-de Vries (mKdV), and nonlinear Schr\"odinger (NLS) equations, they possess infinitely many conservation laws, admit soliton solutions, and can be solved by inverse-scattering and algebro-geometric methods~\cite{akns 73a, akns 73b, akns 74, hiro 72, ablo 81, date 82, dubr 82, belo 94}. Also, the sine-Gordon equation is  part of a combined sG/mKdV hierarchy, placing both equations within a common integrable framework~\cite{gesz 00, gesz 03}.

The finite-dimensional construction used in this paper originates in the
commuting-ODE framework developed by Its, Kotlyarov, and Kozel~\cite{its 76, itsk 76, kotl 76, koze 76}. In this approach, finite-gap solutions are constructed from
commuting systems of ordinary differential equations generated by a
matrix-polynomial ansatz. 
Recently, this framework was used to  derive
sharp amplitude bounds for finite-gap solutions of several nonlinear
Schr\"odinger-type equations and for the modified Korteweg-de Vries
equation~\cite{wrig 19, wrig 20, wrig 24, wrig 26}. Bertola and Tovbis obtained sharp maximal-amplitude formulas for the focusing and defocusing nonlinear Schr\"odinger equations using algebro-geometric and Riemann-Hilbert methods~\cite{bert 17}.
A similar finite-dimensional framework was used by Bernatska to construct finite-gap solutions of the mKdV and sine-Gordon equations~\cite{bern 25, bern 26}. 
Related finite-dimensional descriptions
of finite-gap solutions of the sine-Gordon equation were developed by
Ercolani, Forest, Flaschka, and McLaughlin through polynomial recursion operators and
isospectral flows on finite-dimensional invariant manifolds~
\cite{fore 82, erco 85, erco 90}.

The finite-gap
solutions of the sine-Gordon and sinh-Gordon equations arise naturally on 
the same finite-dimensional phase space previously used to analyze maximal amplitudes of
finite-gap solutions of the mKdV equation~\cite{wrig 26}. The same shifted polynomial ansatz associated with the principal grading of the loop algebra is used to construct the hierarchy of commuting flows, instead of the more common AKNS formulation. Within this framework, the hierarchy reduction and the optimization mechanism are naturally implemented. The same finite-dimensional optimization argument applies to both the mKdV and the sine-Gordon equations, illustrating that it is a property of the finite-dimensional hierarchy rather than of the individual reductions. The final form of the sharp upper bound formula is conveniently expressed in terms of parameters related by the gauge equivalence of the principal-grading and AKNS formulations.  

This paper establishes a sharp upper bound for the densities of finite-gap solutions of the  sine-Gordon equation. The  bound  is derived without explicit integration of the finite-gap solutions.  Instead, the critical-point equations for the maximizing configuration induce a natural
factorization of the invariant polynomial that expresses the physically relevant variable in terms
of the spectral invariants.

In particular, we show that for  finite-gap (\(N\)-phase) solutions of the sine-Gordon equation, the maximizing configuration induces a natural factorization of the invariant polynomial \(\mathscr{R}(\lambda)\) on the double cover \(\lambda = -E^2\). The parameter $\lambda$ is the spectral parameter of the principal grading, and the parameter $E$ is  the AKNS spectral parameter under the gauge equivalence established in Appendix A. Let
$\mathcal E^+$ be the set of corresponding  upper-half-plane spectral points, viz.,
\begin{equation}
\mathcal E^+ = \{ E \in \mathbb{C}: \Im \left( E \right) >0, \mathscr{R} \left( - E^2 \right) = 0 \}.
\end{equation}
Then the density $\phi_x,$ corresponding to the local kink density of the classical soliton, satisfies the sharp upper bound
\begin{equation}\label{sharp}
|\phi_{x} (x,t)|
\le
2 \sum_{E \in\mathcal E^+}\Im \left(E\right).
\end{equation}
The factor $2$ in the sharp upper bound formula~(\ref{sharp}) arises from the normalization of the sine-Gordon equation~(\ref{sG}).
The bound is sharp in the sense that there exists a solution that attains the upper bound. An analogous formula is established for a bounded class of solutions of the sinh-Gordon equation~(\ref{shG}).

The remainder of this paper is organized as follows. Section 2 establishes a concrete formulation of the finite-gap sine-Gordon solutions using dynamical variables in a finite-dimensional hierarchy of commuting systems of ordinary differential equations derived from a  loop algebra of $\mathfrak{sl}(2,\mathbb{C})$ with a shifted polynomial ansatz. Section 3 details the direct analysis at the level of the dynamical variables of the critical points of these commuting flows, culminating in the derivation of the sharp upper bound formula for the density of a finite-gap solution of the  sine-Gordon equation~(\ref{sG}) and an analogous formula for the sinh-Gordon equation~(\ref{shG}). Appendix A discusses the relation to the AKNS spectral problem, while Appendix B gives genus-one examples of the sharp upper bound formulas.

\section{Finite-Dimensional Hierarchy}

In this work the sine-Gordon equation~(\ref{sG}) is studied within a  loop algebra of $\mathfrak{sl} (2,\mathbb{C})$. The density $\phi_x (x,t)$ is associated directly with the diagonal entry of a generating matrix Laurent polynomial $\Psi^{(N)}.$ The Laurent polynomials used to define the commuting hierarchy flows satisfy a shifted ansatz with respect to the spectral parameter $\lambda,$ and a nonhomogeneous splitting of the loop algebra ensures that the flows are well defined and commute. This construction corresponds to the principal grading of the loop algebra~\cite{drin 85, kac 90}. The resulting spectral problem is gauge equivalent to the AKNS formulation, with principal-grading spectral parameter $\lambda$ and the AKNS spectral parameter $E$ related by $\lambda=-E^2$; see Appendix A.

The finite-gap solutions are defined as solutions of $N$ nonlinear autonomous ordinary differential equations generated by $3N$ dependent complex dynamical variables subject to invariant reality constraints, yielding an effective $3N$-dimensional real phase space~\cite{wrig 26, bern 25, bern 26}. These $N$ ordinary differential equations commute with each other and possess a common set of invariants, allowing the compatible local solutions to extend globally. 
The sine-Gordon equation is commonly interpreted  as a negative flow in the integrable hierarchy. A distinctive feature of the finite-dimensional  projection is that the first negative hierarchy flow is represented, up to sign, by the $(N-1)$th positive flow of the finite hierarchy. Consequently, the sine-Gordon equation is realized within the same finite-dimensional hierarchy of commuting positive flows used previously for the mKdV equation.

Although the finite hierarchy of commuting flows remains the same as for the mKdV equation, the reality conditions and the time flow are different for the sine-Gordon equation. The different reductions lead to distinct geometries of the spectral invariants, yet the optimization mechanism persists and yields closely related sharp upper bound formulas.

We now recall the important features of the construction of the combined sG/mKdV finite-dimensional hierarchy~\cite{wrig 26} and introduce the reality constraints and normalization condition that realize the sine-Gordon time flow within the $(N-1)$th flow of the hierarchy.

\begin{definition}[Loop Algebra and Splitting Operators]
Let $\mathfrak{g} = \mathfrak{sl}(2,\mathbb{C}) \otimes \mathbb{C} [\lambda, \lambda^{-1}],$ the loop algebra of Laurent polynomials in the spectral parameter $\lambda,$ and let $\Phi \in \mathfrak{g},$ viz.,
\begin{equation}
\Phi = \sum\limits^{M_{+}}_{j=M_{-}} A_j \lambda^j,
\end{equation}
where $A_j \in \mathfrak{sl}(2,\mathbb{C}),$ for $j \in \mathbb{Z}.$ 
Write 
\begin{equation}
A_j = \left(\begin{array}{cc} - a_j & b_j \\ c_j & a_j \end{array}\right),
\end{equation}
for $j=M_{-}, \ldots, M_{+}.$
For any $\Phi \in \mathfrak{g},$ define the splitting operators
\begin{equation}
(\Phi)_+ = \sum\limits^{M_+}_{j=0} A_j \lambda^j - \left(\begin{array}{cc} 0 & 0 \\ c_0 & 0 \end{array} \right)
\end{equation}
and
\begin{equation}
(\Phi)_- = \sum\limits^{-1}_{j=M_{-}} A_j \lambda^j  + \left(\begin{array}{cc} 0 & 0 \\ c_0 & 0 \end{array} \right).
\end{equation}

\end{definition}

The sine-Gordon reality conditions are associated with a different involution of the loop algebra than the mKdV reality conditions.

\begin{lemma}[Commutation of Involution and the Splitting] \label{involutionlemma}
Define the involution $\tau: \mathfrak g \rightarrow \mathfrak g$ by
\begin{equation}
\left( \tau \Phi \right) (\lambda) = J(\lambda) \Phi (\lambda)^T J^{-1} (\lambda),
\end{equation}
where $(\cdot)^T$ is the matrix transpose and 
\begin{equation}
J(\lambda) = \left(\begin{array}{cc} -1 & 0 \\ 0 & \lambda \end{array} \right).
\end{equation}
Then the involution $\tau$ commutes with the projection operators, viz., 
\begin{equation}
\tau \left( \Phi \right)_{\pm}  = \left(   \tau \Phi  \right)_{\pm}. 
\end{equation}
\end{lemma}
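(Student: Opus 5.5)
The approach is a direct coefficient computation. Both $\tau$ and the splitting operators are $\mathbb{C}$-linear, and $(\Phi)_- = \Phi - (\Phi)_+$. So it suffices to prove $\tau(\Phi)_+ = (\tau\Phi)_+$; the minus case then follows by subtracting from $\tau\Phi$. By linearity I would work with a single monomial $A_j\lambda^j$ and describe $\tau$ entry by entry.

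The first step computes $\tau(A_j\lambda^j)$. We have $J^{-1}(\lambda)=\mathrm{diag}(-1,\lambda^{-1})$, and
\[
A_j^T=\left(\begin{array}{cc} -a_j & c_j \\ b_j & a_j\end{array}\right).
\]
Multiplying out gives
\[
J A_j^T J^{-1} = \left(\begin{array}{cc} -a_j & -c_j\lambda^{-1} \\ -b_j\lambda & a_j\end{array}\right).
\]
Collecting powers of $\lambda$, the coefficients of $\tau\Phi$ in the notation of the Definition are
\[
a'_j = a_j, \qquad b'_j = -c_{j+1}, \qquad c'_j = -b_{j-1}.
\]
So $\tau$ keeps the diagonal entries at the same power of $\lambda$, moves upper-right entries down one power, and moves lower-left entries up one power.

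The second step rewrites the splitting in the same entrywise terms. Reading off the Definition, $(\Phi)_+$ keeps exactly these coefficients:
\begin{itemize}
\item $a_j$ for $j\ge 0$;
\item $b_j$ for $j\ge 0$;
\item $c_j$ for $j\ge 1$.
\end{itemize}
The subtracted term $\left(\begin{array}{cc}0&0\\ c_0&0\end{array}\right)$ is precisely what removes $c_0$. Equivalently, assign principal degree $2j$ to $a_j$, $2j+1$ to $b_j$ and $2j-1$ to $c_j$. Then $(\Phi)_+$ is the projection onto components of nonnegative principal degree.

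The third step compares the two sides coefficient by coefficient. The coefficients of $(\tau\Phi)_+$ are:
\begin{itemize}
\item $a'_j=a_j$ for $j\ge0$;
\item $b'_j=-c_{j+1}$ for $j\ge 0$, i.e.\ $c_{j+1}$ with $j+1\ge1$;
\item $c'_j=-b_{j-1}$ for $j\ge1$, i.e.\ $b_{j-1}$ with $j-1\ge 0$.
\end{itemize}
These are exactly the images under $\tau$ of the coefficients retained in $(\Phi)_+$. Hence $(\tau\Phi)_+=\tau(\Phi)_+$.

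Conceptually, $\tau$ preserves principal degree: $b'_j$, of degree $2j+1$, comes from $c_{j+1}$, also of degree $2j+1$. It therefore commutes with any projection defined by a degree cutoff. I would state this as the underlying reason and present the index check above as the proof.

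The only real obstacle is bookkeeping. One must get the conjugation by $J$ right, in particular which off-diagonal entry picks up $\lambda$ and which picks up $\lambda^{-1}$. One must also handle the boundary term $c_0$ correctly, since that term is exactly what makes the splitting compatible with the index shifts $b'_j=-c_{j+1}$ and $c'_j=-b_{j-1}$.
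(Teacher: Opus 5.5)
Your argument is correct. The coefficient map $a'_j=a_j$, $b'_j=-c_{j+1}$, $c'_j=-b_{j-1}$ checks out. So does your identification of $(\cdot)_+$ as the projection onto nonnegative principal degree ($a_j\mapsto 2j$, $b_j\mapsto 2j+1$, $c_j\mapsto 2j-1$), which $\tau$ preserves.

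The paper reaches the same conclusion with different bookkeeping and never introduces the grading. It writes $(\Phi)_+=(\Phi)_0-\left(\begin{array}{cc}0&0\\ c_0&0\end{array}\right)$, where the unshifted truncation $(\Phi)_0=\sum_{j\ge0}A_j\lambda^j$ commutes with the transpose. It then computes how conjugation by $J$ fails to commute with $(\cdot)_0$: the defect is the single boundary matrix $\left(\begin{array}{cc}0&-c_0\lambda^{-1}\\ b_{-1}&0\end{array}\right)$. The $-c_0\lambda^{-1}$ entry cancels $\tau$ of the correction term. The surviving $b_{-1}$ entry is exactly the correction $-\left(\begin{array}{cc}0&0\\ -b_{-1}&0\end{array}\right)$ that defines $(\tau\Phi)_+$, because $-b_{-1}$ is the degree-zero lower-left coefficient of $\tau\Phi$.

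The paper's version is shorter, since it touches only the two coefficients at the cut. It also shows explicitly how the correction term for $\Phi$ turns into the correction term for $\tau\Phi$. Yours checks every coefficient slot, but it gives the structural reason the shifted splitting is the right one: $\tau$ is homogeneous of degree zero in the principal grading. That shows at once that $\tau$ commutes with every principal-degree cutoff, not just this one.
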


\begin{proof}
For any $\Phi \in \mathfrak{g},$ let $(\cdot)_{0}$ be the usual unshifted splitting operator,
\begin{equation}
\left( \Phi \right)_{0} = \sum\limits^{M_+}_{j=0} A_j \lambda^j.
\end{equation}
Then $( \cdot )_{0}$ commutes with the transpose and, also, we can write
\begin{equation}
\left( \Phi \right)_{+} = \left( \Phi \right)_{0} - \left( \begin{array}{cc} 0 & 0 \\ c_0 & 0 \end{array} \right).
\end{equation}
Then
\begin{equation}
\begin{array}{rcl}
 \tau \left( \Phi \right)_{+}  & = &   \tau  \left( \Phi \right)_0 - \tau \left( \begin{array}{cc} 0 & 0 \\ c_0 & 0 \end{array} \right) \\
& = & J \left( \Phi^T \right)_0 J^{-1} - \left(\begin{array}{cc} 0 & -c_0 \lambda^{-1} \\ 0 & 0 \end{array} \right) \\
& = & \left( J \Phi^T J^{-1} \right)_{0} + \left(\begin{array}{cc} 0 & -c_0 \lambda^{-1} \\  b_{-1} & 0 \end{array} \right) - \left(\begin{array}{cc} 0 & -c_0 \lambda^{-1} \\ 0 & 0 \end{array} \right) \\
& =&  \left( J \Phi^T J^{-1} \right)_{0} - \left(\begin{array}{cc} 0 & 0  \\ - b_{-1} & 0 \end{array} \right) \\
& = & \left( J \Phi^T J^{-1} \right)_{+} \\
& = & \left( \tau \Phi  \right)_{+},
\end{array}
\end{equation}
since $-b_{-1}$ is the lower-left coefficient of degree zero in the matrix $J \Phi^T J^{-1}.$
Consequently, 
\begin{equation}
\tau \left(\Phi \right)_{-}  = \tau  \Phi - \tau \left( \Phi \right)_+ = \tau \Phi  - \left( \tau \Phi \right)_+ =  \left( \tau \Phi  \right)_{-}.
\end{equation}
\end{proof}

\begin{corollary}[Commutation of the Derived Involution and the Splitting]
Define the involution $\rho: \mathfrak g \rightarrow \mathfrak g$ by
\begin{equation}
(\rho\Phi)(\lambda) =-\overline{(\tau\Phi)(\overline{\lambda})}.
\end{equation}
Then the involution $\rho$ commutes with the projection operators, viz.,
\begin{equation}
\rho (\Phi)_\pm = \bigl(\rho \Phi \bigr)_\pm.
\end{equation}
\end{corollary}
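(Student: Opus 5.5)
The plan is to write $\rho$ as the composition of $\tau$ with the map $\sigma:\mathfrak g\to\mathfrak g$, $(\sigma\Phi)(\lambda) = -\overline{\Phi(\overline{\lambda})}$. Then it suffices to show that $\sigma$ commutes with both splitting operators; the corollary follows from Lemma~\ref{involutionlemma}. Indeed, $\rho = \sigma\circ\tau$, so
\begin{equation}
\rho(\Phi)_\pm = \sigma\bigl(\tau(\Phi)_\pm\bigr) = \sigma\bigl((\tau\Phi)_\pm\bigr) = \bigl(\sigma\tau\Phi\bigr)_\pm = (\rho\Phi)_\pm ,
\end{equation}
where the second equality is Lemma~\ref{involutionlemma} and the third is the commutation property of $\sigma$ applied to the element $\tau\Phi \in \mathfrak g$.

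The key step is therefore the commutation of $\sigma$ with $(\cdot)_\pm$, which I will verify coefficientwise. If $\Phi = \sum_j A_j\lambda^j$, then $\overline{\Phi(\overline\lambda)} = \sum_j \overline{A_j}\,\lambda^j$, since conjugating the argument and then the whole expression restores $\lambda^j$. Hence
\begin{equation}
\sigma\Phi = \sum_j \bigl(-\overline{A_j}\bigr)\lambda^j ,
\end{equation}
so $\sigma$ acts on each coefficient matrix separately and preserves the degree in $\lambda$. The matrix $-\overline{A_j}$ is still traceless, so $\sigma\Phi\in\mathfrak g$.

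Because of this, the unshifted projection $(\cdot)_0$ commutes with $\sigma$. For the shift term, the degree-zero lower-left coefficient of $\sigma\Phi$ is $-\overline{c_0}$. Thus
\begin{equation}
\sigma\left(\begin{array}{cc} 0 & 0 \\ c_0 & 0\end{array}\right) = \left(\begin{array}{cc} 0 & 0 \\ -\overline{c_0} & 0\end{array}\right),
\end{equation}
which is exactly the correction matrix that appears in $(\sigma\Phi)_+$. Combining this with $(\Phi)_+ = (\Phi)_0 - \left(\begin{array}{cc} 0&0\\ c_0&0\end{array}\right)$ gives $\sigma(\Phi)_+ = (\sigma\Phi)_+$. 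The minus projection then follows as in the proof of Lemma~\ref{involutionlemma}, by additivity: $(\Phi)_- = \Phi - (\Phi)_+$, and $\sigma$ is additive.

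I also need that $\rho$ is an involution, which the statement asserts. Since $\sigma^2=\mathrm{id}$ and $\tau^2=\mathrm{id}$, it is enough to check that $\sigma$ and $\tau$ commute. This holds because $J(\lambda)$ has real coefficients, so $\overline{J(\overline\lambda)} = J(\lambda)$, and because conjugation commutes with transposition. Then $\rho^2 = \sigma\tau\sigma\tau = \sigma^2\tau^2 = \mathrm{id}$.

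The only point requiring care is the bookkeeping of the $\lambda\mapsto\overline\lambda$ substitution: it must be checked that it introduces no mixing of degrees. It does not, since conjugation is applied to $\lambda$ inside and to the whole expression outside, leaving every $\lambda^j$ unchanged. Everything else is immediate from Lemma~\ref{involutionlemma}.
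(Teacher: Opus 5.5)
Your proposal is correct and follows the paper's argument: factor $\rho$ as the conjugation map $\sigma$ composed with $\tau$, invoke Lemma~\ref{involutionlemma} for $\tau$, and use that conjugation commutes with the splitting. You go slightly further than the paper in two places. You verify that commutation coefficientwise, including the shift term $c_0 \mapsto -\overline{c_0}$, which the paper simply asserts. You also check that $\rho$ is an involution, which the paper takes for granted.
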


\begin{proof}
By Lemma~\ref{involutionlemma}
\begin{equation}
\tau (\Phi)_\pm
=
\bigl(\tau \Phi \bigr)_\pm .
\end{equation}
Since complex conjugation commutes with the projection operators,
\begin{equation}
-\overline{
\left(\tau \left( \Phi \right)_{\pm} \right) (\overline{\lambda}) 
}
=
-\overline{
\bigl( \left( \tau \Phi \right)_\pm (\overline{\lambda} )\bigr)
}
=
\left(
-\overline{\left( \tau \Phi \right) (\overline{\lambda})}
\right)_\pm .
\end{equation}
Hence
\begin{equation}
\rho (\Phi)_\pm 
=
\bigl(\rho \Phi \bigr)_\pm .
\end{equation}
\end{proof}

\begin{definition}[Dynamical Variables] \label{dynamicalvariables}
For $N \in \mathbb{N},$ consider $3N$ complex-valued variables $f_j,g_j, h_j \in \mathbb{C},$ for $j=0, \ldots, N-1.$ Define three polynomials
\begin{equation}
F_{N-1}(\lambda) =  \sum\limits_{j=0}^{N-1} f_j \lambda^j
\end{equation}
\begin{equation}
G_{N}(\lambda) = \sum\limits_{j=0}^{N} g_j \lambda^j,
\end{equation}
\begin{equation}
H_{N} (\lambda) = \sum\limits_{j=0}^{N} h_j \lambda^j,
\end{equation}
where $g_{N} = \overline{h_N} =  b \in \mathbb{C}$ is a constant.
Then define the generating matrix
\begin{equation}
\Psi^{(N)} (\lambda) = \left(\begin{array}{cc} -F_{N-1}(\lambda) & G_{N}(\lambda) \lambda^{-1} \\ H_{N} (\lambda) & F_{N-1}(\lambda) \end{array} \right),
\end{equation}
where  $\Psi^{(N)} (\lambda) \in \mathfrak{g} = \mathfrak{sl}(2,\mathbb{C}) \otimes \mathbb{C}[\lambda, \lambda^{-1}].$
\end{definition}

The generating matrix shifts the degrees of the off-diagonal elements relative to each other. The relative shift of the off-diagonal elements is crucial to the construction of  the combined sG/mKdV hierarchy. The nonhomogeneous projection operators respect this relative shift,  splitting the loop algebra into two subalgebras  in the principal grading.

\begin{definition}[Time Flow Matrices]
Define the hierarchy flow matrices
\begin{equation}
V^{(k)}  = ( \lambda^{k-N+1} \Psi^{(N)} )_+,
\end{equation}
for $k=0, \ldots, N-1.$  
\end{definition}

Suppose, also, that  the first negative time flow is defined to be 
\begin{equation}
V^{(-1)} = \left( \Psi^{(N)} \right)_{-},
\end{equation}
 then 
\begin{equation}
\begin{array}{rcl}
\left[ V^{(-1)}, \Psi^{N} \right] & = & \left[ \Psi^{(N)} - \left( \Psi^{(N)} \right)_{+} , \Psi^{(N)} \right] \\[.1in]
&  =&  - \left[ \left( \Psi^{(N)} \right)_{+}, \Psi^{(N)} \right] \\[.1in]
&  = & - \left[V^{(N-1)}, \Psi^{(N)} \right].
\end{array}
\end{equation}
Therefore, the vector field generated by the first negative hierarchy flow coincides, up to sign, with the vector field generated by the 
$(N-1)$th flow of the finite-dimensional hierarchy.  In the following construction, the first negative time flow is never used, instead the sine(sinh)-Gordon flows are constructed  using only the positive time flows.

\begin{definition}[Nonlinear Autonomous Systems]
Define $N$ nonlinear autonomous ordinary differential equations with $N$ independent time variables $t_k \in \mathbb{R},$ for $k=0, \ldots, N-1,$ and $3N$ dependent complex variables
given by the components of $\mathbf{y} \in \mathbb{C}^{3N},$ viz., $y_j=f_{j-1}, y_{N+j}= g_{j-1}, y_{2N+j}= h_{j-1},$ for $j=1, \ldots, N,$  by
\begin{equation}
\frac{\partial}{\partial t_k} \mathbf{y} = \mathbf{F} (\mathbf{y}),
\label{Fode}
\end{equation}
and the components of $\mathbf{F} : \mathbb{C}^{3N} \rightarrow \mathbb{C}^{3N}$ are  polynomials  in the $3N$ dependent variables, $\mathbf{y} \in \mathbb{C}^{3N},$ determined by equating the coefficients of powers of $\lambda$ in the equations
\begin{equation}
\f{\partial \Psi^{(N)}}{\partial t_k} = \left[V^{(k)},\Psi^{(N)} \right],
\label{ode}
\end{equation}
where 
\begin{equation}
[A,B] = AB - BA
\end{equation}
is the usual commutator operator on arbitrary matrices $A, B \in \mathfrak{sl}(2,\mathbb{C}).$
\end{definition}

The hierarchy equations in~(\ref{ode}) define a closed finite-dimensional dynamical system on the Laurent-polynomial ansatz \(\Psi^{(N)}\). The relative shift in the off-diagonal elements and the projection operator act together to close the recursion in powers of $\lambda.$
For details of the closure of the hierarchy equation~(\ref{ode}) and proofs of the following two lemmas, see~\cite{wrig 26}.

\begin{lemma}[Zero-Curvature Identity~\cite{dick 03}]
The $N$ matrix polynomials $V^{(k)},$ for $k = 0, \ldots, N-1,$ satisfy the identity
\begin{equation}
\frac{\partial }{\partial t_\ell} V^{(k)} - \frac{\partial }{\partial t_k} V^{(\ell)}= [V^{(\ell)}, V^{(k)}],
\label{Videntity}
\end{equation}
for $k, \ell = 0, \ldots, N-1.$
\end{lemma}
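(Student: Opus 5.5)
The plan is the standard Adler--Kostant--Symes argument. It uses three facts about $V^{(k)}=(\lambda^{k-N+1}\Psi^{(N)})_+$: the Lax equations (\ref{ode}), the splitting $\Phi=(\Phi)_++(\Phi)_-$ into subalgebras, and the fact that $\lambda^{m}\Psi^{(N)}$ commutes with $\Psi^{(N)}$ and with $\lambda^{n}\Psi^{(N)}$.

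First I check the subalgebra property. The image of $(\cdot)_+$ consists of Laurent polynomials in nonnegative powers whose $\lambda^0$ coefficient is upper triangular. The image of $(\cdot)_-$ consists of negative powers plus a strictly lower triangular $\lambda^0$ term. Both spaces are closed under commutators. The product of two upper triangular matrices is upper triangular. In the minus part, the product of two strictly lower triangular matrices has zero commutator, and a $\lambda^{0}$ term times $\lambda^{-1}$ terms only produces negative powers. Hence $[\mathfrak g_\pm,\mathfrak g_\pm]\subset\mathfrak g_\pm$.

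Write $\Psi_k=\lambda^{k-N+1}\Psi^{(N)}$, so that $V^{(k)}=(\Psi_k)_+$. Since multiplication by a power of $\lambda$ commutes with $\partial_{t_\ell}$, and the projection is linear with constant coefficients, we get
\[
\partial_{t_\ell}V^{(k)}=\bigl([V^{(\ell)},\Psi_k]\bigr)_+ ,\qquad \partial_{t_k}V^{(\ell)}=\bigl([V^{(k)},\Psi_\ell]\bigr)_+ .
\]
Now substitute $V^{(\ell)}=\Psi_\ell-(\Psi_\ell)_-$ in the first expression. Because $[\Psi_\ell,\Psi_k]=0$, this gives $-([(\Psi_\ell)_-,\Psi_k])_+$. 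Expanding $\Psi_k=(\Psi_k)_++(\Psi_k)_-$, the term $[(\Psi_\ell)_-,(\Psi_k)_-]$ lies in $\mathfrak g_-$, so its plus part vanishes. Therefore
\[
\partial_{t_\ell}V^{(k)}=-\bigl([(\Psi_\ell)_-,V^{(k)}]\bigr)_+ .
\]
The second expression is simply $([V^{(k)},(\Psi_\ell)_+])_+ + ([V^{(k)},(\Psi_\ell)_-])_+$, and its first term equals $[V^{(k)},V^{(\ell)}]$ by the subalgebra property.

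Subtracting the two expressions, the $(\Psi_\ell)_-$ terms cancel:
\[
\partial_{t_\ell}V^{(k)}-\partial_{t_k}V^{(\ell)}=-[V^{(k)},V^{(\ell)}]=[V^{(\ell)},V^{(k)}],
\]
which is (\ref{Videntity}).

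The main obstacle is legitimacy rather than algebra. The Lax equation (\ref{ode}) is only imposed on the finite ansatz $\Psi^{(N)}$, so one must confirm that $[V^{(\ell)},\Psi^{(N)}]$ actually has the shape of $\Psi^{(N)}$. This is the closure property cited from~\cite{wrig 26}, which I would assume, so that $\partial_{t_\ell}\Psi_k=\lambda^{k-N+1}[V^{(\ell)},\Psi^{(N)}]$ holds as Laurent polynomials. The other point to check carefully is the verification of the shifted subalgebra property for the $c_0$ correction; that is the only step where the nonhomogeneous splitting matters.
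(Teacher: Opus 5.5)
Your argument is correct. It is the standard Adler--Kostant--Symes (Dickey) splitting argument. The paper gives no proof of its own here and defers both this lemma and the closure of (\ref{ode}) to the cited references, so your check that the $c_0$-shifted splitting gives two complementary subalgebras is exactly the ingredient those references rely on.

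The closure you assume is easy to confirm directly. Bound the degrees of $[V^{(k)},\Psi^{(N)}]=-[(\lambda^{k-N+1}\Psi^{(N)})_-,\Psi^{(N)}]$ from above using the minus form and from below using the plus form; this also shows that $g_N$ and $h_N$ are stationary.

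Read your computation as an identity of derivations along the individual vector fields. Read that way, it does not presuppose the commutativity proved in the next lemma.
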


\begin{lemma}[Commuting Differential Equations~\cite{dick 03}]
The $N$ nonlinear autonomous ordinary differential equations defined by equation~(\ref{ode}) commute with each other, viz.,
\begin{equation}
\frac{\partial}{\partial t_k} \left(\frac{\partial}{\partial t_\ell} \Psi^{(N)} \right) = \frac{\partial}{\partial t_\ell} \left(\frac{\partial}{\partial t_k} \Psi^{(N)} \right),
\end{equation}
for $k, \ell = 0, 1, \ldots, N-1.$
\end{lemma}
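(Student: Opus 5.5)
The plan is to deduce commutativity of the flows from the Zero-Curvature Identity~(\ref{Videntity}) together with the Lax form~(\ref{ode}) and the Jacobi identity. Since~(\ref{ode}) defines a polynomial vector field $\mathbf{F}$ on $\mathbb{C}^{3N}$, it suffices to verify the equality of mixed derivatives at the level of the matrix $\Psi^{(N)}$. Here every time derivative of $V^{(k)}=(\lambda^{k-N+1}\Psi^{(N)})_+$ is understood as the derivative induced by the vector field $\partial/\partial t_\ell$. This is legitimate because the splitting $(\cdot)_+$ is linear and commutes with differentiation in the $t_\ell$, and multiplication by $\lambda^{k-N+1}$ does too.

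First, I will compute the iterated derivative using~(\ref{ode}) and the Leibniz rule for the commutator:
\begin{equation*}
\frac{\partial}{\partial t_k}\left(\frac{\partial \Psi^{(N)}}{\partial t_\ell}\right)
=\left[\frac{\partial V^{(\ell)}}{\partial t_k},\Psi^{(N)}\right]+\left[V^{(\ell)},\left[V^{(k)},\Psi^{(N)}\right]\right].
\end{equation*}
Interchanging $k$ and $\ell$ and subtracting gives
\begin{equation*}
\left[\frac{\partial V^{(\ell)}}{\partial t_k}-\frac{\partial V^{(k)}}{\partial t_\ell},\Psi^{(N)}\right]
+\left[V^{(\ell)},[V^{(k)},\Psi^{(N)}]\right]-\left[V^{(k)},[V^{(\ell)},\Psi^{(N)}]\right].
\end{equation*}

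Next, the Jacobi identity rewrites the last two terms as $\left[[V^{(\ell)},V^{(k)}],\Psi^{(N)}\right]$. The Zero-Curvature Identity~(\ref{Videntity}), with the roles of $k$ and $\ell$ read off appropriately, gives $\partial_{t_k}V^{(\ell)}-\partial_{t_\ell}V^{(k)}=[V^{(k)},V^{(\ell)}]$. The first bracket therefore equals $\left[[V^{(k)},V^{(\ell)}],\Psi^{(N)}\right]$, which cancels the Jacobi term exactly, so the difference vanishes. Equating coefficients of powers of $\lambda$ then gives the componentwise statement for $\mathbf{y}$.

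The main obstacle is a sign and index check. The identity must be applied with the correct orientation, and $\partial_{t_k}V^{(\ell)}$ must really be the quantity appearing in the identity, namely $(\lambda^{\ell-N+1}\partial_{t_k}\Psi^{(N)})_+$. This in turn relies on the closure of~(\ref{ode}) on the ansatz, which we take as given from~\cite{wrig 26}.
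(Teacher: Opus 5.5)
Your proposal is correct and is the standard argument the paper defers to in its cited references: differentiate the Lax form~(\ref{ode}) along both flows, rewrite the double commutators via the Jacobi identity, and cancel against the zero-curvature identity~(\ref{Videntity}) with $k$ and $\ell$ interchanged. Your signs check out, and so does your reading of $\partial_{t_k}V^{(\ell)}$ as $(\lambda^{\ell-N+1}[V^{(k)},\Psi^{(N)}])_+$, i.e., as the derivative along the vector field (which relies on the cited closure of the ansatz).
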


Thus the hierarchy generated by $\Psi^{(N)}$ defines compatible finite-dimensional dynamical systems of classical commuting flows. In order to analyze the solutions of the 
sine(sinh)-Gordon equations, we define invariant reality conditions on the dynamical variables. 

\begin{theorem}[Reality Conditions]
Suppose the initial data satisfy either one of the following reality conditions:
\begin{enumerate}
\item Sine-Gordon reality conditions,
\begin{equation}\label{realitysG}
f_j=-\overline{f_j},
\qquad
g_j=\overline{h_j},
\end{equation} for $j=0, \ldots, N,$ with $g_N=\overline{h_N} = b=i.$
\item Sinh-Gordon reality conditions,
\begin{equation}\label{realityshG}
f_j = \overline{f_j}, 
\qquad
g_j =\overline{g_j},
\qquad
h_j=\overline{h_j},
\end{equation} for $j=0, \ldots, N,$ with $g_N=h_N=b=1.$
\end{enumerate}
Then the hierarchy flows preserve the reality conditions.
\end{theorem}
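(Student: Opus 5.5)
The plan is to realize each set of reality conditions as the fixed-point set of an antilinear involution of the loop algebra that acts on $\mathfrak g$ as a Lie-algebra automorphism, maps the ansatz $\Psi^{(N)}$ to itself, and commutes with the splitting. Invariance under the flows then follows from uniqueness of solutions of the autonomous system (\ref{Fode}). For the sine-Gordon case I would use the involution $\rho$ of the Corollary. For the sinh-Gordon case I would use $(\sigma\Phi)(\lambda)=\overline{\Phi(\overline{\lambda})}$, i.e.\ coefficientwise complex conjugation; it obviously commutes with $(\cdot)_\pm$, because the splitting acts coefficientwise and the correction term $c_0$ is simply conjugated.

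\textbf{Step 1: identify the fixed sets.} With $J^{-1}=\mathrm{diag}(-1,\lambda^{-1})$, a direct computation gives
\begin{equation*}
\tau\Psi^{(N)}=\left(\begin{array}{cc} -F_{N-1} & -H_N\lambda^{-1}\\ -G_N & F_{N-1}\end{array}\right),
\end{equation*}
so that
\begin{equation*}
\rho\Psi^{(N)}=\left(\begin{array}{cc} \overline{F}_{N-1} & \overline{H}_N\lambda^{-1}\\ \overline{G}_N & -\overline{F}_{N-1}\end{array}\right),
\end{equation*}
where the bar denotes conjugation of the coefficients. Hence $\rho$ maps the ansatz of Definition~\ref{dynamicalvariables} to itself, with the same degree shifts. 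Moreover $\rho\Psi^{(N)}=\Psi^{(N)}$ holds exactly when $f_j=-\overline{f_j}$ and $g_j=\overline{h_j}$, which are the conditions (\ref{realitysG}); the normalization $g_N=i$, $h_N=-i$ is consistent with this. Similarly, $\sigma\Psi^{(N)}=\Psi^{(N)}$ is equivalent to (\ref{realityshG}).

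\textbf{Step 2: $\rho$ and $\sigma$ are automorphisms.} Since the transpose reverses products, $\tau[A,B]=-[\tau A,\tau B]$. Combining this with the minus sign and the complex conjugation in $\rho$ gives $\rho[A,B]=[\rho A,\rho B]$. For $\sigma$ the same identity is immediate. Both maps also commute with multiplication by $\lambda^m$, because $\overline{\overline{\lambda}^m}=\lambda^m$.

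\textbf{Step 3: invariance of the flows.} Suppose $\rho\Psi^{(N)}=\Psi^{(N)}$. By the Corollary and Step 2,
\begin{equation*}
\rho V^{(k)}=\rho(\lambda^{k-N+1}\Psi^{(N)})_+=(\lambda^{k-N+1}\rho\Psi^{(N)})_+=V^{(k)},
\end{equation*}
and therefore $\rho[V^{(k)},\Psi^{(N)}]=[V^{(k)},\Psi^{(N)}]$. Thus the vector field is tangent to the fixed set.

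To make this rigorous, let $\Psi(t_k)$ be the solution of (\ref{ode}) with fixed-point initial data and set $\widetilde\Psi=\rho\Psi$. Because $t_k$ is real, $\rho$ commutes with $\partial/\partial t_k$. Because $\rho$ commutes with the splitting and brackets, $\widetilde\Psi$ solves the same polynomial ODE (\ref{Fode}), now with $\widetilde V^{(k)}=(\lambda^{k-N+1}\widetilde\Psi)_+$. It also has the same initial data. Uniqueness of solutions for a polynomial (hence locally Lipschitz) vector field then gives $\widetilde\Psi=\Psi$ on the common interval of existence. The identical argument with $\sigma$ handles the sinh-Gordon case. The constants $g_N,h_N$ are unaffected, since the top coefficients are constant under the flows.

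\textbf{Main obstacle.} The delicate point is Step 2 together with the ansatz-preservation part of Step 1. One must check that $\tau$ is an anti-automorphism, so that the extra minus sign in $\rho$ turns it into an automorphism. One must also check that conjugating by the $\lambda$-dependent $J$ exchanges the shifted entries $G_N\lambda^{-1}$ and $H_N$ correctly. The interaction with the $c_0$ correction in the splitting is already handled by Lemma~\ref{involutionlemma} and the Corollary.
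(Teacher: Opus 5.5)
Your proposal is correct and follows essentially the same route as the paper. Both arguments realize each set of reality conditions as the fixed set of $\rho$ (respectively, coefficientwise conjugation $\overline{\Psi^{(N)}(\overline{\lambda})}$). Both then use that this map respects brackets, multiplication by $\lambda^{k-N+1}$ and the splitting, so the transformed $\Psi^{(N)}$ solves the same hierarchy with the same initial data, and conclude by uniqueness.

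Your explicit checks fill in details the paper leaves implicit. These are that $\tau$ is an anti-automorphism, so $\rho$ is an automorphism, and that $\rho$ maps the shifted ansatz to itself with the same $g_N,h_N$.
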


\begin{proof}
The sine-Gordon reality conditions~(\ref{realitysG}) are equivalent to 
\begin{equation}
\left( \rho  \Psi^{(N)} \right) (\lambda)  = - J(\lambda)  \left(\Psi^{(N)} (\overline{\lambda}) \right)^\dagger J^{-1}(\lambda) = \Psi^{(N)} (\lambda).
\end{equation}
where $(\cdot)^\dagger$ indicates the conjugate transpose. Now, for $k=0, 1, \ldots, N-1,$
\begin{equation}
\begin{array}{rcl}
\frac{\partial}{\partial t_k} \left(  \rho \Psi^{(N)} \right)     
& = & - J  \left(\left[V^{(k)} (\overline{\lambda}),\Psi^{(N)} (\overline{\lambda}) \right]\right)^\dagger J^{-1}  \\[.1in]
& = & \left[ -J \left( V^{(k)}(\overline{\lambda}) \right)^\dagger J^{-1}, - J \left( \Psi^{(N)} (\overline{\lambda}) \right)^\dagger J^{-1} \right] \\[.1in]
& = & \left[ \rho V^{(k)} ,  \rho \Psi^{(N)}  \right] \\[.1in]
& = &  \left[   \rho \left( \lambda^{k-N+1} \Psi^{(N)}  \right)_{+} ,   \rho \Psi^{(N)} \right] \\[.2in]
& = &  \left[  \left( \lambda^{k-N+1}  \rho    \Psi^{(N)} \right)_{+}  ,  \rho \Psi^{(N)}  \right]. \\[.2in]
\end{array}
\end{equation}
Since the transformed solution satisfies the same hierarchy equations and agrees with the original solution at the initial time whenever the sine-Gordon reality conditions hold, uniqueness implies that the sine-Gordon reality conditions are preserved for all hierarchy times.

The sinh-Gordon reality conditions~(\ref{realityshG}) are equivalent to
\begin{equation}
\overline{ \Psi^{(N)} (\overline{\lambda} ) } = \Psi^{(N)}(\lambda).
\end{equation}
Now
\begin{equation}
\begin{array}{rcl}
\frac{\partial}{\partial t_k}  \left(  \overline{\Psi^{(N)} (\overline{\lambda})} \right)
& = & \left[\overline{V^{(k)} (\overline{\lambda})},\overline{\Psi^{(N)} (\overline{\lambda})} \right] \\
& = & \left[ \left( \lambda^{k-N+1} \overline{\Psi^{(N)} (\overline{\lambda} ) } \right)_{+}, \overline{\Psi^{(N)} (\overline{\lambda} )} \right].
\end{array}
\end{equation}
As before, the transformed solution satisfies the same hierarchy equations. If the solution agrees with the original solution at the initial time when the sinh-Gordon reality conditions hold, then uniqueness implies that the sinh-Gordon reality conditions are preserved for all hierarchy times.

\end{proof}

\begin{theorem}[Sine(sinh)-Gordon Equations and Hierarchy Relations]\label{equationtheorem}
Let
$
\Psi^{(N)}
$
satisfy the hierarchy equations
\begin{equation} \label{hier}
\frac{\partial}{\partial t_k} \Psi^{(N)}
=
[V^{(k)},\Psi^{(N)}],
\qquad
k=0,\ldots,N-1,
\end{equation}
subject to one of the two  invariant reality conditions~(\ref{realitysG}) or~(\ref{realityshG}).

Then the highest-order recursion relations are
\begin{subequations}\label{top}
\begin{align}
\frac{\partial}{\partial t_k} g_{N} &= 0,\label{topA} \\
\frac{\partial}{\partial t_k} h_{N} &=0,\label{topB}
\end{align}
\end{subequations}
and
\begin{subequations}\label{leadingorder}
\begin{align}
\frac{\partial}{\partial t_k} f_{N-1}
&=
\overline{b} g_{N-k-1}-b h_{N-k-1}, \label{leadingA}\\
\frac{\partial}{\partial t_k} g_{N-1}
&=
2 b f_{N-k-2}-2 f_{N-1} g_{N-k-1}, \label{leadingB}\\
\frac{\partial}{\partial t_k} h_{N-1}
&=
-2 \overline{b} f_{N-k-2} + 2 f_{N-1} h_{N-k-1},\label{leadingC}
\end{align}
\end{subequations}
for \(k=0,\ldots,N-1\), with the convention that $f_j=g_j=h_j=0,$ for $j <0,$ and $g_{N} = \overline{h_N} = b \in \mathbb{C}.$
And the lowest-order recursion relations are
\begin{subequations} \label{lowest}
\begin{align}
\frac{\partial}{\partial t_{k}} f_0 & = h_{N-k} g_0 - g_{N-k} h_0, \label{lowestA}\\
\frac{\partial}{\partial t_{k}} g_0 &=  -2 f_{N-k-1} g_0, \label{lowestB}\\
\frac{\partial}{\partial t_{k}} h_0  &=  2 f_{N-k-1} h_0, \label{lowestC}
\end{align}
\end{subequations}
for $k=0, \ldots, N-1.$

With sine-Gordon reality conditions~(\ref{realitysG}), restrict to the invariant manifold determined by the normalization condition
\begin{equation}
|g_0|^2=|h_0|^2=\frac{1}{16},
\end{equation}
and define the real-valued function $\phi$ by $g_0 = -\frac{i}{4} e^{-i \phi}.$ 

With sinh-Gordon reality conditions~(\ref{realityshG}), restrict to the invariant manifold determined by the normalization condition
\begin{equation}
g_0 h_0 = \frac{1}{16},
\end{equation}
with $g_0>0,$ and define the real-valued function $\phi$ by  $g_0 =\frac{1}{4}e^{\phi}.$  Then,  when $N=1,$ $\phi$ generates a  traveling-wave solution of the sine(sinh)-Gordon equations~(\ref{sG}, \ref{shG}) in the $k=0$ flow. When $N \geq 2,$ $\phi$  generates a solution to the sine(sinh)-Gordon equations~(\ref{sG}, \ref{shG}) in the $k=0$ and $k=N-1$ flows.
\end{theorem}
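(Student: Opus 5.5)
The plan is to read off all the stated recursion relations by comparing coefficients of powers of $\lambda$ in (\ref{hier}). Then use the lowest-order relations (\ref{lowest}) together with the normalization to identify $f_0$ and $f_{N-1}$ as derivatives of $\phi$. Finally, substitute into (\ref{lowestA}) to obtain the sine(sinh)-Gordon equation.

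\emph{Step 1: the recursion relations.} Write $V^{(k)}=(\lambda^{k-N+1}\Psi^{(N)})_+$ explicitly. The diagonal part is $\mp\sum_{j\ge N-k-1} f_j\lambda^{j-N+k+1}$. The upper-right part is $\sum_{j\ge N-k} g_j \lambda^{j-N+k}$. The lower-right-to-left part is $\sum_{j\ge N-k-1} h_j\lambda^{j-N+k+1}$, with the constant term $h_{N-k-1}$ removed by the shifted projection. That removal is exactly the $c_0$ correction in $(\cdot)_+$. I would then compute $[V^{(k)},\Psi^{(N)}]$ entry by entry.
\begin{itemize}
\item The top coefficients give (\ref{top}) and (\ref{leadingorder}).
\item In the identity $[V^{(k)},\Psi^{(N)}]=-[(\lambda^{k-N+1}\Psi^{(N)})_-,\Psi^{(N)}]$, the negative part contains only a few terms. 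Using that form, the $\lambda^{0}$ coefficient of the diagonal and the lowest coefficients of the off-diagonal entries give (\ref{lowest}).
\end{itemize}
I expect this bookkeeping to be the main obstacle, in particular tracking the shifted projection and the $\lambda^{-1}$ in the upper-right entry correctly. I would check it first in the case $N=1$. The conventions $f_j=g_j=h_j=0$ for $j<0$ and $g_N=\overline{h_N}=b$ then fill in the boundary cases $k=0$ and $k=N-1$.

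\emph{Step 2: invariance of the normalization.} From (\ref{lowestB}) and (\ref{lowestC}) we get $\partial_{t_k}(g_0h_0)=0$ for every $k$. Under the sine-Gordon conditions $h_0=\overline{g_0}$, so $|g_0|^2=|h_0|^2=g_0h_0$ is conserved. Under the sinh-Gordon conditions $g_0h_0$ is conserved directly, and $g_0>0$ persists because (\ref{lowestB}) is linear in $g_0$. Hence the normalization defines an invariant manifold and $\phi$ is well defined and real there.

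\emph{Step 3: sine-Gordon.} Take $b=i$ and $g_0=-\tfrac{i}{4}e^{-i\phi}$, so $\partial g_0=-i(\partial\phi)\,g_0$. Comparing with (\ref{lowestB}) gives $f_{N-k-1}=\tfrac{i}{2}\partial_{t_k}\phi$. In particular:
\begin{itemize}
\item the $k=N-1$ flow, with $x=t_{N-1}$, gives $f_0=\tfrac{i}{2}\phi_x$;
\item the $k=0$ flow gives $f_{N-1}=\tfrac{i}{2}\phi_{t_0}$.
\end{itemize}
For $N\ge2$, take $t=t_0$ and apply (\ref{lowestA}) with $k=0$:
\[
\tfrac{i}{2}\phi_{xt}=\partial_{t_0}f_0=\overline{b}\,g_0-b\,h_0=-i(g_0+\overline{g_0})=\tfrac{i}{2}\sin\phi,
\]
which is (\ref{sG}).

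For $N=1$ we have $f_0=f_{N-1}$ and only the $t_0$ flow exists. The same computation gives $\phi_{t_0t_0}=\sin\phi$. Setting $\phi(x,t)=\Phi(x+t)$ with $\Phi(s)=\phi(s)$ gives $\phi_{xt}=\Phi''=\sin\phi$, so $\phi$ is a traveling-wave solution.

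\emph{Step 4: sinh-Gordon.} Here $b=1$, $g_0=\tfrac14 e^{\phi}$ and $h_0=\tfrac14e^{-\phi}$. Then (\ref{lowestB}) gives $f_{N-k-1}=-\tfrac12\partial_{t_k}\phi$, and (\ref{lowestA}) with $k=0$ gives
\[
-\tfrac12\phi_{xt}=g_0-h_0=\tfrac12\sinh\phi .
\]
This is (\ref{shG}) up to the sign of one time variable, which I would absorb by taking $t=-t_0$ (or $x=-t_{N-1}$). I would double-check the sign conventions here, since they depend on the exact signs obtained in Step 1.
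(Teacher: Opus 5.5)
Your proposal is correct and is essentially the paper's proof. You equate coefficients of $\lambda$ to get the recursion relations and use (\ref{lowestB})--(\ref{lowestC}) to make $g_0h_0$ invariant. You then identify $f_{N-k-1}=\tfrac{i}{2}\partial_{t_k}\phi$ (respectively $-\tfrac12\partial_{t_k}\phi$) and substitute into one first-order relation to get $\phi_{xt}=\sin\phi$ (respectively $-\sinh\phi$, fixed by reversing one variable).

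The only difference is the labeling for $N\ge 2$. You set $x=t_{N-1}$, $t=t_0$ and differentiate $f_0$ via (\ref{lowestA}) at $k=0$; the paper sets $x=t_0$, $t=t_{N-1}$ and differentiates $f_{N-1}$ via (\ref{leadingA}) at $k=N-1$. The two right-hand sides coincide and the light-cone equation is symmetric in $x,t$, so both work. The paper's labeling is the one that gives $\phi_x=-2if_{N-1}$, the density bounded in Section~3.
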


\begin{proof}
The recursion relations come from equating coefficients in the hierarchy equation~(\ref{hier}). At the leading orders, equations~(\ref{topA}, \ref{topB}) come from the off-diagonal terms at orders $\lambda^{N-1}$ and $\lambda^{N},$ respectively.
Equations~(\ref{leadingA}, \ref{leadingC}) arise from the diagonal and lower-left entries at the level of \(\lambda^{N-1}\), while equation~(\ref{leadingB}) arises from the upper-right entry at the level of \(\lambda^{N-2}\). At the lowest order, equations~(\ref{lowestA}, \ref{lowestC}) come from the $\lambda^{0}$-order terms in the diagonal entries and lower-left entries, respectively. Equation~(\ref{lowestB}) comes from the $\lambda^{-1}$-order terms in the upper-right entries.

Equations~(\ref{lowestB}, \ref{lowestC}) imply that
\begin{equation}\label{ghproduct}
\frac{\partial}{\partial t_k} \left( g_0 h_0 \right) =0,
\end{equation}
for $k=0, \ldots, N-1,$ so that $g_0 h_0$ is an invariant of all the hierarchy flows.
In particular, with sine-Gordon reality conditions, $|g_0|^2 =\frac{1}{16}$ is an invariant of all the flows. So the component $g_0$ lives on a circle of radius $1/4$ centered at the origin, and the real-valued function $\phi$ is well-defined (modulo $2 \pi$) by $g_0=-\frac{i}{4} e^{-i \phi}.$ Given a smooth solution for $g_0,$ a smooth branch of $\phi$ can be defined by choosing the initial $\phi$ in the interval $[0,2 \pi).$ Consequently, $h_0=\overline{g_0} = \frac{i}{4} e^{i \phi}.$ Similarly, with sinh-Gordon reality conditions, $g_0 >0$ and $g_0 h_0 =\frac{1}{16}$ are  invariant constraints,  and the real-valued function $\phi$ is well-defined by $g_0 = \frac{1}{4} e^{\phi}.$  Consequently, $h_0 =  \frac{1}{4} e^{-\phi}.$ Note that with initial data satisfying the sine(sinh)-Gordon reality conditions and normalization conditions, it is impossible for $g_0$ or $h_0$ to equal zero at any time, because of the invariance of the product $g_0 h_0 = \frac{1}{16} >0.$

In the case where $N=1,$ there is only one time flow, $k=0,$ which we identify with the spatial variable $t_0=x.$ Using the fact that $f_{j}=g_{j}=h_{j}=0$ for all $j<0,$  the system of equations~(\ref{leadingorder}) is simply
\begin{subequations} \label{zeroflow}
\begin{align}
f_{0x} &=  \overline{b} g_0- b h_0, \label{zeroflowA}\\[.1in]
g_{0x} &= - 2 f_0 g_0, \label{zeroflowB}\\[.1in]
h_{0x} &=  2 f_0 h_0. \label{zeroflowC}
\end{align}
\end{subequations}

Under the sine-Gordon reality conditions $b=i,$ and equation~(\ref{zeroflowB}) implies
\begin{equation}
f_0 = \frac{i}{2} \phi_x.
\end{equation}
Then equation~(\ref{zeroflowA}) gives
\begin{equation}
\frac{i}{2} \phi_{xx} = -i \left(-\frac{i}{4} e^{-i \phi} \right) - i \left(\frac{i}{4} e^{i\phi}\right),
\end{equation}
or, equivalently,
\begin{equation} \label{xequalstsG}
\phi_{xx} = \sin \phi,
\end{equation}
which is the sine-Gordon equation~(\ref{sG}) with the one-dimensional reduction $t=x.$ In particular, if $\phi(x)$ solves equation~(\ref{xequalstsG}), then 
$\Phi(X,T)=\phi(X+T)$ is a traveling-wave solution of the sine-Gordon equation in  characteristic (light-cone) coordinates,
\begin{equation}
\Phi_{XT} = \sin \Phi.
\end{equation}

Similarly, under the sinh-Gordon reality conditions $b=1,$ and equation~(\ref{zeroflowB}) implies
\begin{equation}
f_0 = - \frac{1}{2} \phi_x.
\end{equation}
Then equation~(\ref{zeroflowA}) gives
\begin{equation}
-\frac{1}{2} \phi_{xx} = \frac{1}{4} e^{\phi} - \frac{1}{4} e^{-\phi},
\end{equation}
which is equivalent to the one-dimensional reduction $t=-x$ of the sinh-Gordon equation~(\ref{shG}),
\begin{equation}\label{xequalsshG}
\phi_{xx} =- \sinh \phi.
\end{equation}
In particular, if $\phi(x)$ solves equation~(\ref{xequalsshG}), then $\Phi(X,T)=\phi (X-T)$ is a traveling-wave solution of the sinh-Gordon equation,
\begin{equation}
\Phi_{XT} = \sinh \Phi.
\end{equation}

If $N \geq 2,$ then the $k=0$ and $k=N-1$ flows together generate the sine(sinh)-Gordon equations with the identifications $t_0=x$ and $t_{N-1}=t.$ 
In this case, we use the leading-order recursion relation~(\ref{leadingA}) for $k=N-1,$ viz., 
\begin{equation} \label{Nminus1flow}
\left( f_{N-1} \right)_t  =  \overline{b}  g_{0} - b h_{0},
\end{equation}
and the lowest-order recursion relations~(\ref{lowestB}, \ref{lowestC}) for $k=0,$ viz.,
\begin{subequations}\label{lowestNminus1}
\begin{align}
\left( g_{0} \right)_{x} &= - 2 f_{N-1} g_{0},\label{lowestNminus1A}\\
\left( h_{0} \right)_{x} &= 2 f_{N-1} h_{0}.
\end{align}
\end{subequations}

Under the sine-Gordon reality conditions $b=i,$ and equation~(\ref{lowestNminus1A}) implies
\begin{equation}
f_{N-1} = \frac{i}{2} \phi_x.
\end{equation}
Then equation~(\ref{Nminus1flow}) gives
\begin{equation}
\frac{i}{2} \phi_{xt} = -i \left(-\frac{i}{4} e^{-i\phi}\right) - i \left( \frac{i}{4} e^{i\phi} \right),
\end{equation}
which is the sine-Gordon equation~(\ref{sG}),
\begin{equation}
\phi_{xt} = \sin \phi.
\end{equation}

Similarly, under the sinh-Gordon reality conditions $b=1,$ and equation~(\ref{lowestNminus1A}) implies
\begin{equation}
f_{N-1} =- \frac{1}{2} \phi_x.
\end{equation}
Then equation~(\ref{Nminus1flow}) shows that $\phi$ satisfies
\begin{equation}\label{negshG}
\phi_{xt} = -\sinh \phi.
\end{equation}
Then $\phi(x,t)$ generates a solution of the sinh-Gordon equation~(\ref{shG}) through time reversal,  $x=X, t=-T.$ In particular, if $\phi(x,t)$ is a solution of equation~(\ref{negshG}), then $\Phi (X,T)=\phi(X,-T)$ is a solution of the sinh-Gordon equation
\begin{equation}
\Phi_{XT} = \sinh \Phi.
\end{equation}

\end{proof}

The hierarchy equations naturally produce the sine-Gordon equation in
light-cone coordinates,
\begin{equation}\label{sG2}
\phi_{xt}=\sin \phi.
\end{equation}
To recover the more familiar laboratory-coordinate form, define
\begin{equation}
\Phi(X,T) = \phi\left(\frac{X+T}{2}, \frac{X-T}{2}\right).
\label{labcoords}
\end{equation}
Then equation (\ref{sG2}) becomes
\begin{equation}
\Phi_{XX}
-
\Phi_{TT}
=
\sin \Phi.
\label{sGlab}
\end{equation}
Since the finite-dimensional
hierarchy is naturally expressed in the variables $x$ and $t$, we will
continue to work in the light-cone form of equation~(\ref{sG2}).

\begin{lemma}[Invariant Polynomial]
The $2N+1$-degree polynomial $\mathscr{R}(\lambda),$ defined by
\begin{equation}
\mathscr{R}(\lambda) = -\lambda^2 \det \Psi^{(N)} (\lambda) = \lambda^2 F_{N-1}^2(\lambda) + \lambda G_{N}(\lambda) H_{N}(\lambda) = |b|^2 \lambda \sum_{j=1}^{2N} (\lambda-\lambda_{j}),
\label{Reqn}
\end{equation}
is  invariant with respect to all the time-flow variables $t_k,$ for $k=0, \ldots, N-1.$  The numbers $\lambda_j \in \mathbb{C},$ for $j=1, \ldots, 2N,$ and the fixed root at 
$\lambda =0$ are the
invariant roots of $\mathscr{R}(\lambda) = 0$ and are assumed to be distinct. 
\label{invariant}
\end{lemma}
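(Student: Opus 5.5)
The invariance follows from the Lax form of the flows. By~(\ref{hier}), $\partial_{t_k}\Psi^{(N)} = [V^{(k)},\Psi^{(N)}]$ for each $k$. For any $2\times 2$ matrix function, Jacobi's formula gives $\partial_{t_k}\det \Psi = \operatorname{tr}(\operatorname{adj}(\Psi)\,\partial_{t_k}\Psi)$. I will work with the equivalent statement for trace-free matrices instead: for $\Psi\in\mathfrak{sl}(2,\mathbb{C})$ we have $\det\Psi = -\tfrac12\operatorname{tr}(\Psi^2)$. Hence
\begin{equation*}
\partial_{t_k}\operatorname{tr}\bigl((\Psi^{(N)})^2\bigr) = 2\operatorname{tr}\bigl(\Psi^{(N)}[V^{(k)},\Psi^{(N)}]\bigr) = 2\operatorname{tr}\bigl(\Psi^{(N)}V^{(k)}\Psi^{(N)} - (\Psi^{(N)})^2V^{(k)}\bigr) = 0
\end{equation*}
by cyclicity of the trace. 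This holds identically in $\lambda$, coefficient by coefficient. Since $\lambda^2$ does not depend on the times, $\mathscr{R}(\lambda) = -\lambda^2\det\Psi^{(N)}(\lambda)$ is invariant, and therefore so are all its coefficients and roots.

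Next comes the explicit form. A direct computation from Definition~\ref{dynamicalvariables} gives
\begin{equation*}
\det\Psi^{(N)} = -F_{N-1}^2 - \lambda^{-1}G_NH_N,
\end{equation*}
so
\begin{equation*}
\mathscr{R}(\lambda) = \lambda^2F_{N-1}^2 + \lambda G_NH_N,
\end{equation*}
which is a genuine polynomial. I then count degrees. The term $\lambda^2F_{N-1}^2$ has degree at most $2N$, while $\lambda G_NH_N$ has degree exactly $2N+1$ with leading coefficient $g_Nh_N = b\,\overline{b} = |b|^2$, using $h_N = \overline{b}$. So $\mathscr{R}$ has degree $2N+1$ and leading coefficient $|b|^2$.

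Both terms carry at least one factor of $\lambda$, so $\lambda=0$ is always a root. Factoring over $\mathbb{C}$ then yields the product $|b|^2\lambda\prod_{j=1}^{2N}(\lambda-\lambda_j)$; the paper writes this with $\sum$, but it should read as a product. The constant term of $\mathscr{R}(\lambda)/\lambda$ is $g_0h_0$. This equals $\tfrac1{16}$ under the normalizations of Theorem~\ref{equationtheorem}, so on the physical manifold the root at $0$ is simple, consistent with the distinctness assumption. That assumption is a hypothesis, not something to prove.

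The only real care point is the bookkeeping in the expansion. One must check that the Laurent terms cancel correctly, so that multiplying by $\lambda^2$ leaves no negative powers. One must also use $g_N = \overline{h_N} = b$ for the leading coefficient; under the sinh-Gordon conditions, $b=1$ still gives $|b|^2 = 1$. Everything else is immediate.
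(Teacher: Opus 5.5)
Your proof is correct and is essentially the paper's argument. The paper invokes Jacobi's formula for $\partial_{t_k}\det\Psi^{(N)}$ under the Lax equation and defers the details to an earlier work; your identity $\det\Psi=-\tfrac12\operatorname{tr}(\Psi^2)$ with cyclicity of the trace is the same computation, because $\operatorname{adj}\Psi=-\Psi$ for trace-free $2\times 2$ matrices. Your explicit expansion, the degree count with leading coefficient $g_Nh_N=|b|^2$, and your reading of the displayed $\sum$ as a $\prod$ (as the paper itself writes in Definition~\ref{squareroots}) are all correct.
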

\begin{proof}
The invariance of the determinant of $\Psi$ is a standard calculation based on Jacobi's formula for the derivative of the determinant of a matrix.  The proof can be found in~\cite{wrig 26}.
\end{proof}

Notice that the  invariant hyperelliptic spectral curve $\Gamma : w^2 = -\det \Psi^{(N)} = F_{N-1}(\lambda)^2 + G_{N} H_{N} \lambda^{-1}$ has a fixed branch point at
 $\lambda =0,$ in addition to the $2N$ roots of $\lambda F_{N-1} (\lambda)^2 + G_{N} H_{N}.$ Thus the finite-gap solution generated by $\Psi^{(N)}$ is parametrized by a hyperelliptic curve with $2N+1$ finite branch points. In addition to the branch point at infinity, this gives a genus of $N.$ In order to account for the additional branch point
at zero in the invariant hyperelliptic spectral curve, it is convenient to add the fixed root at $\lambda=0$ to the definition of the invariant polynomial $\mathscr{R}(\lambda).$

\begin{theorem}[Global Solution of the Sine-Gordon Equation] \label{globalsG}
Let $y_j=f_{j-1}, y_{N+j}= g_{j-1}, y_{2N+j} = h_{j-1},$ for $j= 1, \ldots, N,$ be a compatible local  solution of the $N$  autonomous ordinary differential equations~(\ref{Fode})
with initial data that satisfy  the invariant sine-Gordon reality conditions and normalization condition of Theorem~\ref{equationtheorem}. 
Then the solution exists for all $t_k \in \mathbb{R},$ for $k=0, \ldots, N-1,$ and is uniformly bounded on $\mathbb{R}^N.$   In particular, the density $\phi_x$ of the corresponding solution of the sine-Gordon equation~(\ref{shG}) is a uniformly bounded global function on $\mathbb{R}^N$.
\end{theorem}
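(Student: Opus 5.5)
The plan is to show that the invariance of $\mathscr{R}(\lambda)$ (Lemma~\ref{invariant}), together with the sine-Gordon reality conditions, confines the trajectory to a compact subset of the real $3N$-dimensional phase space. Global existence then follows from the standard continuation argument for polynomial autonomous ODEs. Under the reality conditions~(\ref{realitysG}) each $f_j$ is purely imaginary, so $f_j = i u_j$ with $u_j\in\mathbb{R}$, and $h_j=\overline{g_j}$. For real $\lambda$ this gives
\begin{equation}
\mathscr{R}(\lambda) = -\lambda^2 U(\lambda)^2 + \lambda\, |G_N(\lambda)|^2, \qquad U(\lambda)=\sum_{j=0}^{N-1} u_j\lambda^j .
\end{equation}
For $\lambda<0$ both terms are nonpositive. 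Hence for every fixed negative real $\lambda$,
\begin{equation}
\lambda^2 U(\lambda)^2 + |\lambda|\,|G_N(\lambda)|^2 = |\mathscr{R}(\lambda)|,
\end{equation}
and the right-hand side is fixed by the initial data. Each of $\lambda^2U(\lambda)^2$ and $|\lambda||G_N(\lambda)|^2$ is therefore bounded by a constant $C(\lambda)$ along the flow.

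I will then pick $N$ distinct negative values $\lambda=-1,\dots,-N$ for $U$, and $N+1$ distinct negative values for $G_N$. With these choices, $U(\lambda)$ and $G_N(\lambda)$ are uniformly bounded at the chosen points. Invertibility of the Vandermonde matrices bounds the coefficients $u_0,\dots,u_{N-1}$ and $g_0,\dots,g_N$, and the $h_j=\overline{g_j}$ follow. (Since $g_N=i$ is fixed, using $N$ points for $G_N - i\lambda^N$ also works.) This gives a bound on $|\mathbf y|$ that depends only on the invariant roots $\lambda_j$. The bound holds for every compatible local solution and on its whole domain, because the reality conditions and the normalization are preserved along the flows by the Reality Conditions theorem and Theorem~\ref{equationtheorem}.

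Next comes global existence. The vector fields in~(\ref{Fode}) are polynomial, hence locally Lipschitz. Starting from any initial point in the compact invariant set, a single flow $t_k$ cannot blow up in finite time, because its solution stays in that set; so each flow is complete. Because the flows commute (Commuting Differential Equations lemma), the joint solution $\mathbf y(t_0,\dots,t_{N-1})$ can be defined by composing the individual complete flows in any order. This is well defined and smooth on all of $\mathbb{R}^N$, and it agrees with the compatible local solution by uniqueness. The composed flows stay in the same level set of the invariants, so the bound is uniform on $\mathbb{R}^N$. Finally, Theorem~\ref{equationtheorem} gives $f_{N-1}=\frac{i}{2}\phi_x$, so $|\phi_x| = 2|f_{N-1}|$ is uniformly bounded and globally defined. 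For $N=1$ the same holds with $f_0$.

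The main obstacle is the sign argument. It needs real $\lambda$ and the conjugation structure of $\rho$: I must check that $\rho\Psi=\Psi$ really gives $F$ purely imaginary with real $\lambda$-coefficients and $H_N(\lambda)=\overline{G_N(\overline\lambda)}$, so that on $\lambda<0$ both contributions have the same sign. I also need $\mathscr{R}(\lambda)$ to be finite and fixed for each chosen $\lambda$; this holds because $\mathscr{R}$ is a polynomial determined by the invariant roots. A secondary point is making sure the coefficient bounds from Vandermonde inversion do not depend on the trajectory. This follows because the evaluation points and $|\mathscr{R}|$ at them depend only on the spectral data.
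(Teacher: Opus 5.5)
Your proposal is correct and follows essentially the same route as the paper. On the negative real axis, the sign-definite identity $-\mathscr{R}(\lambda)=\lambda^2|F_{N-1}(\lambda)|^2+|\lambda|\,|G_N(\lambda)|^2$ bounds the values of $F_{N-1}$ and $G_N$; coefficient bounds follow, and boundedness gives continuation. The differences are minor: you read the bound directly from $|\mathscr{R}(\lambda)|$ instead of through the factorization $\lambda P^+P^-Q$, so you do not need the distinct-roots assumption; you use Vandermonde interpolation where the paper uses Markov's inequality; and your global-existence step, via complete commuting flows, is more explicit than the paper's one-line continuation remark.
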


\begin{proof}
The compatibility of the hierarchy flows guarantees the existence of the smooth local  solution. We will show that  the invariant polynomial \(\mathscr R\) controls the Laurent-polynomial coefficients and yields uniform bounds on all dynamical variables.

The sine-Gordon reality conditions~(\ref{realitysG}) imply that $|b|^2=1$ and, for $\lambda \in \mathbb{R},$ $F_{N-1} (\lambda) = - \overline{F_{N-1} (\lambda)}$ and $H_{N} (\lambda) =  \overline{G_{N} (\lambda)},$ so that
\begin{equation} \label{RsG}
\mathscr{R} (\lambda) = - \lambda^2 |F_{N-1}(\lambda)|^2 + \lambda |G_{N}(\lambda)|^2.
\end{equation}
Thus, for $\lambda \in \mathbb{R},$ 
\begin{equation}
\overline{\mathscr{R} (\lambda)} = \mathscr{R} (\lambda).
\end{equation}
Therefore, the coefficients of $\mathscr{R}$ are real and the roots of $\mathscr{R}(\lambda)=0$ are either real or occur in complex-conjugate pairs.

Now suppose that $\lambda < 0$ is a negative real root of $\mathscr{R}(\lambda)=0.$ Then, equation~(\ref{RsG}) implies that both $F_{N-1} (\lambda)=0$  and $G_{N}(\lambda)=0,$ which means that $\lambda$ is a double root of $\mathscr{R}(\lambda)=0,$ which is not allowed by the assumption that the roots are distinct.  Thus $\mathscr{R}(\lambda)=0$ does not have any negative real roots. All the nonzero roots are either positive or  complex-conjugate pairs. This means that there is an even number of positive roots. Therefore, 

\begin{equation}
\mathscr{R}(\lambda) = \lambda P^+(\lambda) P^- (\lambda) Q (\lambda)
\end{equation}
where $P^{+}, P^{-},$ and $Q$ are monic polynomials,  the roots of $P^{+}(\lambda)$ are all in the upper half plane, $P^-(\lambda) = \overline{P^+ (\overline{\lambda})}$, and $Q(\lambda)$ has an even number of positive roots and no other roots. Therefore,
\begin{equation}
-\lambda^2 |F_{N-1}(\lambda)|^2 + \lambda |G_{N}(\lambda)|^2 =  \lambda |P^+(\lambda)|^2  Q (\lambda).
\end{equation}
Set $\lambda = -\eta,$ where $\eta >0,$ then
\begin{equation}
\eta |F_{N-1} (-\eta)|^2 +  |G_{N} (-\eta)|^2 = |P^{+}(-\eta)|^2 Q(-\eta),
\end{equation}
and so
\begin{equation}
\label{Fbound}
\sqrt{\eta} |F_{N-1} (-\eta)| \leq \sqrt{ Q(-\eta)} |P^+ (-\eta)|
\end{equation}
and
\begin{equation}
\label{Gbound}
|G_{N} (-\eta)| \leq  \sqrt{Q(-\eta)} |P^+(-\eta)|.
\end{equation}
Note that $Q(-\eta) >0,$ since $-\eta <0$ and $Q(\lambda)$ is monic, has an even number of positive roots, and has no other roots. 
Equations~(\ref{Fbound}) and~(\ref{Gbound}) imply uniform bounds on the polynomials $|F_{N-1}(\lambda)|$ and $|G_{N}(\lambda)|$ on a closed finite interval of negative real numbers, e.g., on $-3 \leq \lambda \leq -1.$ After an affine rescaling of the interval \([-3,-1]\) onto \([-1,1]\), standard coefficient estimates for bounded polynomials (e.g., Markov inequalities~\cite{mark 90}) imply uniform bounds on all the coefficients of \(F_{N-1}(\lambda)\) and \(G_N(\lambda)\). The sine-Gordon reality conditions imply that the uniform bounds are also valid for the coefficients of \(H_N(\lambda)\).
This, in turn, implies the uniform boundedness of all the dynamical variables that appear as coefficients in the polynomials $F_{N-1}, G_{N},$ and $H_{N}.$
Since solutions to polynomial vector fields can be continued as long as the solution remains bounded, the uniform coefficient bounds allow the local solution to be continued globally.  In particular,  the density of the finite-gap solution of the  sine-Gordon equation, $\phi_x ({\bf t}) = -2 i f_{N-1} ({\bf t}),$ is uniformly bounded and exists for all times ${\bf t} \in \mathbb{R}^{N}.$ 

Notice that the uniformity of the bounds on the dynamical variables in the solution applies not only to all times but, also, to all initial conditions that share the same invariant polynomial $\mathscr{R}(\lambda).$

\end{proof}

\begin{theorem}[Global Solution of the Sinh-Gordon Equation] \label{globalshG}
Let $y_j=f_{j-1}, y_{N+j}= g_{j-1}, y_{2N+j} = h_{j-1},$ for $j= 1, \ldots, N,$ be a compatible local  solution of the $N$  autonomous ordinary differential equations~(\ref{Fode})
with initial data that satisfy  the invariant sinh-Gordon reality conditions and normalization condition of Theorem~\ref{equationtheorem}. In addition, suppose that the initial data is such that
\begin{enumerate}
\item[(a)]  all of the nonzero roots of the invariant polynomial $\mathscr{R}(\lambda)$ are distinct negative numbers  ordered so that $\lambda_{k+1} < \lambda_{k} < 0,$ for $k=1, \ldots, 2N-1,$ and
\item[(b)] the degree $N$ polynomials $G_{N}(\lambda)$ and $H_{N}(\lambda)$ each have exactly one  negative root in each of the $N$ intervals $[\lambda_{2j},\lambda_{2j-1}],$ for $j= 1, \ldots, N.$ 
\end{enumerate}
Then the solution exists for all $t_k \in \mathbb{R},$ for $k=0, \ldots, N-1,$ and is uniformly bounded on $\mathbb{R}^N.$  In particular, the density $\phi_x$ of the corresponding solution of the sinh-Gordon equation~(\ref{shG}) is a uniformly bounded global function on $\mathbb{R}^N$.
\end{theorem}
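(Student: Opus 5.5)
The plan is to follow the strategy of Theorem~\ref{globalsG}: derive uniform bounds on all coefficients of $F_{N-1}$, $G_N$, $H_N$ from the invariant polynomial $\mathscr{R}(\lambda)$, and then continue the local solution globally. The difference is that in the sinh-Gordon case $\mathscr{R}(\lambda)=\lambda^2F_{N-1}^2+\lambda G_NH_N$ is no longer a sum of squares with definite sign on a half-line. Positivity has to be replaced by a root-confinement argument, which is exactly where hypotheses (a) and (b) enter.

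\textbf{Sign pattern of $\mathscr{R}$.} Under the sinh-Gordon reality conditions $b=1$ and all coefficients are real. By Lemma~\ref{invariant} and hypothesis (a), $\mathscr{R}(\lambda)=\lambda\prod_{j=1}^{2N}(\lambda-\lambda_j)$ is fixed for all times. For $\lambda<0$ I count how many roots $\lambda_j$ exceed $\lambda$.
\begin{itemize}
\item On an open interval $(\lambda_{2j},\lambda_{2j-1})$ exactly $2j-1$ roots exceed $\lambda$. The product is then negative and, multiplied by $\lambda<0$, gives $\mathscr{R}(\lambda)>0$.
\item On the gaps $(\lambda_{2j+1},\lambda_{2j})$, on $(\lambda_1,0)$ and on $(-\infty,\lambda_{2N})$ one gets $\mathscr{R}<0$.
\end{itemize}
Hence the closed set $\{\lambda\le 0:\mathscr{R}(\lambda)\ge0\}$ is the disjoint union of the bands $B_j=[\lambda_{2j},\lambda_{2j-1}]$ together with $\{0\}$.

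\textbf{Confinement of the roots of $G_N$ and $H_N$.} If $\mu$ is a real root of $G_N$ (or $H_N$), then $\mathscr{R}(\mu)=\mu^2F_{N-1}(\mu)^2\ge0$. Also $G_N(0)=g_0=\tfrac14e^{\phi}\neq0$, and for $\mu>0$ we have $\mathscr{R}(\mu)>0$ anyway. So I must also exclude positive roots. For this I use that $G_N$ is a real monic polynomial of degree $N$ with $N$ roots in total. Initially, by (b), these are $N$ simple negative roots $\mu_j(0)\in B_j$.

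The roots depend continuously on $\mathbf{t}$ while the local solution exists. As long as a root is simple and real it stays real, because a real simple root of a real polynomial cannot leave the real axis. Each real root is confined to the set where $\mathscr{R}\ge0$. It cannot pass through $0$, since $G_N(0)\ne0$, and it cannot cross a gap where $\mathscr{R}<0$. Therefore $\mu_j(\mathbf{t})\in B_j$ for all $\mathbf{t}$.

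Two roots could become complex only by first colliding. Such a collision is impossible because they lie in disjoint bands. This gives a continuity/connectedness argument over the (connected) time domain. I expect this to be the main obstacle to write cleanly, in particular the possibility of a root sitting at a band edge $\lambda_k$ where $\mathscr{R}=0$. This is harmless because the bands are disjoint closed sets, so a root in $B_j$ can never reach $B_{j+1}$. The identical argument applies to $H_N$, with $h_0=\tfrac14e^{-\phi}\neq0$.

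\textbf{Bounds and continuation.} With $g_N=h_N=1$ we get $G_N(\lambda)=\prod_{j}(\lambda-\mu_j)$ with every $|\mu_j|\le|\lambda_{2N}|$. Each coefficient of $G_N$ and $H_N$ is therefore bounded by an elementary symmetric bound depending only on $\mathscr{R}$.

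For $F_{N-1}$, which is real, I use
\[
F_{N-1}(\lambda)^2=\frac{\mathscr{R}(\lambda)-\lambda G_N(\lambda)H_N(\lambda)}{\lambda^2}
\]
on the interval $[1,3]$. The right-hand side is uniformly bounded there, so $|F_{N-1}|$ is uniformly bounded on $[1,3]$. After an affine rescaling to $[-1,1]$, the Markov-type coefficient estimates used in Theorem~\ref{globalsG} bound all the $f_j$.

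All dynamical variables are thus uniformly bounded, with bounds depending only on $\mathscr{R}$. Since the vector field is polynomial, the local solution continues to all of $\mathbb{R}^N$. In particular $\phi_x=-2f_{N-1}$ is a uniformly bounded global function.
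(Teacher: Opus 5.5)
Your proposal is correct and follows the paper's proof essentially step for step. The paper likewise confines the real roots of $G_N$ and $H_N$ to the bands where $\mathscr{R}\ge 0$, uses the one-root-per-disjoint-band condition to rule out collisions, bounds the coefficients of $G_N,H_N$ through their roots, then bounds $F_{N-1}$ on $[1,3]$ via Markov's inequality and continues the local solution globally; your explicit handling of possible positive real roots and band-edge roots through the connectedness argument is, if anything, slightly more careful than the paper's wording.
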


\begin{proof}
The normalization conditions for the sinh-Gordon equation in Theorem~\ref{equationtheorem} constrain the global invariant,
\begin{equation}
g_0h_0 = \prod\limits_{j=1}^{2N} \lambda_{j} = \frac{1}{16},
\end{equation}
which is compatible with the root distribution condition (a). Also, if we define $\mu_j,$ for $j=1, \ldots, N,$ to be the roots of $G_{N}(\lambda),$ then
\begin{equation}
g_0 = \prod\limits_{j=1}^{N} \left( - \mu_j \right) >0
\end{equation}
by condition (b), which is compatible with the definition of $\phi,$ viz., $g_0 = \frac{1}{4} e^{\phi}.$

The polynomials $F_{N-1}, G_{N},$ and $H_{N}$ have real coefficients and, therefore, their roots are either real or come in complex-conjugate pairs. The same symmetry is true for $\mathscr{R}(\lambda) = \lambda^2 |F_{N-1} (\lambda)|^2 + \lambda G_{N}(\lambda) H_{N} (\lambda).$ Suppose that $\lambda \in \mathbb{R}$ is a  root of $G_{N}(\lambda)$ or $H_{N}(\lambda)$, then
\begin{equation}\label{rneg}
\mathscr{R} (\lambda) = \lambda \prod\limits_{j=1}^{2N} (\lambda - \lambda_{j} ) = \lambda^2 |F_{N-1}(\lambda)|^2 \geq 0.
\end{equation}
Thus $\lambda \in \mathbb{R}$ must be a root of $\mathscr{R}(\lambda)$ or less than an even number of the $2N+1$ nonpositive roots of $\mathscr{R} (\lambda)=0,$ by condition (a).
Therefore $\lambda$ must lie in the interval $\lambda_{2j} \leq \lambda \leq \lambda_{2j-1},$ for some $j= 1, \ldots, N,$ because of the fixed root at zero. Moreover, the root 
$\lambda$ cannot leave the interval it is in and remain real since this would violate the inequality~(\ref{rneg}). However, by  condition (b), initially there is exactly one root of $G_{N}(\lambda)$ and exactly one root of $H_{N}(\lambda)$ in each interval
$[\lambda_{2j}, \lambda_{2j-1}],$ for $j=1, \ldots, N.$  Since the coefficients of both $G_{N}(\lambda)$ and $H_{N}(\lambda)$ are real, their roots are either real or occur in complex-conjugate pairs. The intervals
$[\lambda_{2j},\lambda_{2j-1}],$ for $j=1,\ldots, N,$  are disjoint, so it is impossible for two roots of $G_{N}(\lambda)$ (or $H_{N}(\lambda)$) to flow continuously, collide, and become a complex-conjugate pair, because initially only one is in each of the disjoint real intervals to which they are individually constrained. Hence, the roots remain trapped in these bounded intervals for all $t_k \in \mathbb{R},$ for $k=0, \ldots, N-1,$ for which the local solution continues to exist.  Thus the roots of $G_{N}(\lambda)$ and $H_{N}(\lambda)$ and, hence, the coefficients of $G_{N}(\lambda)$ and $H_{N}(\lambda)$ are uniformly bounded.

Moreover, since the coefficients of $G_N$ and $H_N$ are uniformly bounded as functions of ${\bf t} \in \mathbb{R}^{N},$
\begin{equation}
|\lambda|^2 |F_{N-1}(\lambda)|^2 \leq |\lambda| |G_{N}(\lambda)H_{N}(\lambda)| + |\mathscr{R}(\lambda)|
\end{equation}
implies that there exists a finite positive number $M$ such that
\begin{equation}
\max\limits_{1 \leq \lambda \leq 3} |F_{N-1} (\lambda)| \leq M,
\end{equation}
where the interval $[1,3]$ can be chosen arbitrarily as long as it excludes zero.
Using an affine transformation of the interval $[1,3]$ to $[-1,1],$ we can apply Markov's inequality  to conclude that the coefficients of $F_{N-1}(\lambda)$ are also uniformly bounded as functions of ${\bf t} \in \mathbb{R}^{N}$.

Since  the uniformly bounded coefficients of $F_{N-1}, G_{N},$ and $H_{N}$ are the components of the  compatible local solution of the commuting polynomial vector fields of equation~(\ref{Fode}), the local solution can be extended globally.  

The normalization constraint implies that $g_0$ remains positive on the global solution, so the  function $\phi$ is well-defined at every point on the global solution, and it has density $\phi_x ({\bf t}) = -2 f_{N-1} ({\bf t})$ which is uniformly bounded and exists for all times ${\bf t} \in \mathbb{R}^{N}.$

As in the sine-Gordon case, the uniformity of the bounds on the dynamical variables in the solution applies not only to all times but, also, to all initial conditions that satisfy the hypotheses of the theorem.

\end{proof}

We make the following interesting observation about a global orbit that satisfies the hypotheses of Theorem~\ref{globalshG}. If
\begin{equation}
G_{N} (\lambda) = \prod\limits_{j=1}^{N} \left( \lambda - \mu_j \right),
\end{equation}
then  $\lambda_{2j} \leq \mu_{j} \leq \lambda_{2j-1} <0,$ for all $j= 1, \ldots, N,$ and
\begin{equation}\label{explicitbounds}
0 <\prod\limits_{j=1}^{N} \left( - \lambda_{2j-1} \right) \leq g_0 = \prod\limits_{j=1}^{N} \left(- \mu_j \right) \leq \prod\limits_{j=1}^{N} \left( - \lambda_{2j} \right).
\end{equation}
Therefore, inequalities~(\ref{explicitbounds}) explicitly show that $\phi \in \mathbb{R},$ defined by $g_0 = \frac{1}{4} e^{\phi},$ is bounded and strictly bounded away from zero.

\section{Sharp Upper Bounds}

\begin{theorem}[Critical Points]\label{criticalpoints}
Let \(y_j=f_{j-1},\; y_{N+j}=g_{j-1},\; y_{2N+j}=h_{j-1}\), for \(j=1,\ldots,N\), be a compatible solution of the \(N\) autonomous ordinary differential equations~\eqref{Fode} under the sine(sinh)-Gordon reality conditions and normalization condition of Theorem~\ref{equationtheorem}.
If the point \(\mathbf t \in\mathbb R^N\) is a critical point of \(-2 i f_{N-1}:\mathbb R^N\to\mathbb R\) in the sine-Gordon case, then
\begin{equation}\label{criticalsG}
g_j(\mathbf t)=-h_j(\mathbf t),
\qquad j=0,\ldots,N-1,
\end{equation}
at \((t_0,t_1,\ldots,t_{N-1})=\mathbf t\).
If the point $\mathbf t \in \mathbb{R}^N$ is a critical point of $-2 f_{N-1}: \mathbb R^N \to \mathbb R$ in the sinh-Gordon case, then
\begin{equation}\label{criticalshG}
g_j(\mathbf t)=h_j(\mathbf t),
\qquad j=0,\ldots,N-1,
\end{equation}
at \((t_0,t_1,\ldots,t_{N-1})=\mathbf t\).
\end{theorem}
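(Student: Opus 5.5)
At a critical point all partial derivatives $\partial f_{N-1}/\partial t_k$, $k=0,\ldots,N-1$, vanish. So the plan is to read off the leading-order recursion relation~(\ref{leadingA}) from Theorem~\ref{equationtheorem}, which gives these derivatives explicitly:
\begin{equation*}
\frac{\partial}{\partial t_k} f_{N-1} = \overline{b}\, g_{N-k-1} - b\, h_{N-k-1}, \qquad k=0,\ldots,N-1 .
\end{equation*}
As $k$ runs over $0,\ldots,N-1$, the index $j=N-k-1$ runs over $N-1,\ldots,0$. Hence the $N$ critical-point equations say exactly that $\overline{b}\, g_j = b\, h_j$ for every $j=0,\ldots,N-1$. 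None of the lower-order coefficients in the convention $f_j=g_j=h_j=0$ for $j<0$ enters, because $N-k-1\ge 0$ throughout this range.

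It remains to insert the normalization of $b$ fixed by each set of reality conditions. In the sine-Gordon case $b=i$, so $\overline{b}=-i$, and the conditions become $-i g_j = i h_j$, that is, $g_j=-h_j$. This is~(\ref{criticalsG}). Since $-2i f_{N-1}$ is a real-valued function (because $f_{N-1}$ is purely imaginary), its gradient vanishes exactly when that of $f_{N-1}$ does, so the derivation is valid. In the sinh-Gordon case $b=\overline{b}=1$, and the conditions become $g_j=h_j$. This is~(\ref{criticalshG}); here the real function $-2f_{N-1}$ is used.

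I expect no serious obstacle. The one point to check is that~(\ref{leadingA}) really holds for every $k$ in the range, including $k=N-1$ (where it gives $\overline b g_0 - b h_0$) and $k=0$ (where it gives $\overline b g_{N-1}-bh_{N-1}$), with $V^{(k)}$ defined via the shifted projection. This is exactly the content of Theorem~\ref{equationtheorem}, which we take as given. I would also remark that in the sine-Gordon case the resulting condition $g_j=-\overline{g_j}$ means each $g_j$ is purely imaginary at the critical point. This is consistent with $g_0=-\tfrac{i}{4}e^{-i\phi}$ forcing $\phi\in\{0,\pi\}$ modulo $2\pi$ there. It is a useful sanity check, and it sets up the factorization of $\mathscr R$ used later.
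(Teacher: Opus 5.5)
Your proposal is correct and follows the paper's proof. Both set $\partial_{t_k} f_{N-1}=0$ in the leading-order relation~(\ref{leadingA}), reindex $j=N-k-1$ to obtain $\overline{b}\,g_j = b\,h_j$ for $j=0,\ldots,N-1$, and then substitute $b=i$ or $b=1$. Your side remark is also correct: at the critical point each $g_j$ is purely imaginary, so $\sin\phi=0$ there.
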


\begin{proof}
Since \(\mathbf t\) is a critical point,
\begin{equation}
\frac{\partial}{\partial t_k} f_{N-1}(\mathbf t)=0,
\qquad k=0,\ldots,N-1.
\end{equation}
Therefore equation~(\ref{leadingA}) implies
\begin{equation}
\overline{b} g_{N-k-1}(\mathbf t)= b h_{N-k-1}(\mathbf t),
\qquad k=0,\ldots,N-1.
\end{equation}
Hence
\begin{equation}\label{bb}
\overline{b} g_j(\mathbf t)= b h_j(\mathbf t),
\qquad j=0,\ldots,N-1.
\end{equation}
Under the sine-Gordon reality conditions, $b=i,$ so equation~(\ref{bb}) implies equation~(\ref{criticalsG}). Under the sinh-Gordon reality conditions, $b=1,$ so equation~(\ref{bb}) implies equation~(\ref{criticalshG}).
\end{proof}

\begin{definition} \label{squareroots}
Let
\begin{equation}
\mathscr R(\lambda)
=
|b|^2\lambda
\prod_{j=1}^{2N}
(\lambda-\lambda_j)
\end{equation}
be the invariant polynomial of Lemma~\ref{invariant}, where
$\lambda_1,\ldots,\lambda_{2N}$
denote its distinct nonzero roots. Assume that there are no negative real roots.
For each $j=1, \ldots, 2N,$ define $E_j \in \mathbb{C}$ to be the unique square root of $-\lambda_j$ with positive imaginary part, viz.,  $\lambda_j = -E_j^2$ and $\Im \left( E_j \right) > 0,$ and
define
\begin{equation}
\mathcal E^+ = \{ E_1, \ldots, E_{2N} \}.
\end{equation}
In other words,
\begin{equation}
\mathcal E^+ = \{ E : \Im \left(E \right) > 0, \mathscr{R}\left( -E^2 \right) = 0 \}.
\end{equation}
\end{definition}

\begin{theorem}[Sharp Upper Bound for the Density of the  Sine-Gordon Equation] \label{sharpsG}
Let
\[
y_j=f_{j-1}, \qquad
y_{N+j}=g_{j-1}, \qquad
y_{2N+j}=h_{j-1},
\qquad j=1,\ldots,N,
\]
be a compatible global solution of the \(N\) autonomous ordinary differential equations~\eqref{Fode}, with initial conditions satisfying the sine-Gordon reality conditions and the normalization condition of Theorem~\ref{globalsG}.
Then the density \(\phi_x\) of the corresponding solution $\phi$ of  the sine-Gordon equation~(\ref{sG}) satisfies
\begin{equation}
|\phi_x(x,t)|
\le
2 \sum_{E\in\mathcal E^+}\Im \left( E \right).
\end{equation}
Moreover, the bound is sharp, in the sense that there exists a compatible global solution whose density attains the upper bound.
\end{theorem}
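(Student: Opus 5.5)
The plan is to prove the bound pointwise by passing to a maximizer over a compact set of admissible data, instead of along a single orbit. Let $K_{\mathscr R}$ be the set of all $\mathbf y\in\mathbb C^{3N}$ that satisfy the sine-Gordon reality conditions and the normalization $|g_0|^2=\frac1{16}$ and whose invariant polynomial is the given $\mathscr R$. The final remark in the proof of Theorem~\ref{globalsG} says the coefficient bounds there are uniform over all such data, so $K_{\mathscr R}$ is bounded. It is closed because it is cut out by polynomial equations, hence compact. The real function $-2if_{N-1}$ is continuous, so it attains its maximum and its minimum on $K_{\mathscr R}$. By Theorem~\ref{globalsG}, the global solution started at an extremal point stays in $K_{\mathscr R}$ for all $\mathbf t\in\mathbb R^N$. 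Therefore $\mathbf t=0$ is a critical point of $-2if_{N-1}$ along that orbit, and Theorem~\ref{criticalpoints} gives $g_j=-h_j$ for $j=0,\dots,N-1$ at that point. It therefore suffices to bound $|{-2if_{N-1}}|$ at points of $K_{\mathscr R}$ satisfying (\ref{criticalsG}).

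At such a point, $h_j=\overline{g_j}$ and $g_j=-h_j$ make $g_j$ purely imaginary for $j<N$, and $g_N=i$. So we can write $G_N=i\Gamma$ and $H_N=-i\Gamma$ with $\Gamma$ a real monic polynomial of degree $N$. Likewise $F_{N-1}=i\Phi$ with $\Phi$ real, so $\phi_x=-2if_{N-1}=2\Phi_{N-1}$, where $\Phi_{N-1}$ is the leading coefficient of $\Phi$. Then $\mathscr R=-\lambda^2\Phi^2+\lambda\Gamma^2$, and substituting $\lambda=-E^2$ gives the factorization
\[
\mathscr R(-E^2)=-E^2\,P(E)\,P(-E),\qquad P(E)=\Gamma(-E^2)+iE\,\Phi(-E^2).
\]
The polynomial $P$ has degree $2N$ and leading coefficient $(-1)^N$. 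Its $E^{2N-1}$ coefficient is $i(-1)^{N-1}\Phi_{N-1}$, so the sum of the roots of $P$ equals $i\Phi_{N-1}$. Since the nonzero roots of $\mathscr R(-E^2)$ are the distinct numbers $\pm E_j$, the roots of $P$ form a choice $s_jE_j$, with $s_j=\pm1$, of exactly one element from each pair. Hence $\Phi_{N-1}=-i\sum_j s_jE_j$. Because $\Phi_{N-1}$ is real, this forces $\Phi_{N-1}=\Im\sum_j s_jE_j$, and so $|\Phi_{N-1}|\le\sum_j\Im E_j$. Applying this at both the maximizer and the minimizer yields $|\phi_x|\le 2\sum_{E\in\mathcal E^+}\Im(E)$ everywhere.

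For sharpness, I would reverse the factorization and choose $P(E)=(-1)^N\prod_{j}(E-E_j)$, taking every root in the upper half plane. The roots of $\mathscr R$ are real or come in conjugate pairs, so $\mathcal E^+$ is invariant under $E\mapsto-\overline E$. Consequently $\overline{P(-\overline E)}=P(E)$, which says that the even part of $P$ has real coefficients and the odd part has purely imaginary ones. This is exactly the form $\Gamma(-E^2)+iE\Phi(-E^2)$ with $\Gamma,\Phi$ real and $\Gamma$ monic. Setting $G_N=i\Gamma$, $H_N=-i\Gamma$, $F_{N-1}=i\Phi$ then gives admissible data whose invariant polynomial is $\mathscr R$. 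The normalization holds automatically, since $g_0^2=-\Gamma(0)^2=-\prod_j E_j^2=-\prod_j(-\lambda_j)$ and $\prod_j\lambda_j=g_0h_0=\frac1{16}$ is fixed by $\mathscr R$. At $\mathbf t=0$ this solution has $\phi_x=2\sum\Im E_j$.

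The main obstacle is the first step. One has to make sure that compactness of $K_{\mathscr R}$ really follows from the uniform bounds of Theorem~\ref{globalsG}, and that the extremum over the whole level set, rather than the supremum along one orbit (which need not be attained), is what gets controlled. A second point to check is that the assumption of no negative roots in Definition~\ref{squareroots} matches the proof of Theorem~\ref{globalsG}, so that the $E_j$ are well defined with $\Im E_j>0$.
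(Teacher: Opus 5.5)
Your proposal is correct and follows the paper's route. Both arguments use compactness of the level set $\mathcal S$ and an extremal point that becomes a critical point along its orbit. Theorem~\ref{criticalpoints} then forces $G_N=-H_N$, and $\mathscr R(-E^2)$ factors with exactly one root taken from each pair $\pm E_j$. Sharpness comes from placing all roots in $\mathcal E^+$ and checking the reality conditions via the symmetry $E\mapsto -\overline{E}$.

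Your use of both the maximum and minimum of the real function $-2if_{N-1}$, together with the observation that reality of $\Phi_{N-1}$ forces $\Phi_{N-1}=\Im\sum_j s_jE_j$, makes explicit what the paper's proof handles only briefly. The paper's claim that any other sign choice reduces $|f_{N-1}|$ relies on $f_{N-1}$ being purely imaginary. You also verify the normalization of the maximizing data explicitly, which the paper does not.
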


\begin{proof}
Fix \(\mathscr R(\lambda)\) the invariant polynomial of the given global solution. 
Let \(\mathcal S\) denote the set of all points in the $3N$-dimensional phase space of the commuting hierarchy which (i)  satisfy the sine-Gordon reality conditions  and (ii)  satisfy the defining equation~(\ref{Reqn}) of the invariant polynomial \(\mathscr R(\lambda)\) with the sine-Gordon normalization condition. Every point of $\mathcal S$ serves as initial data for a unique solution of the commuting hierarchy which preserves the reality conditions and the invariant polynomial $\mathscr{R} (\lambda),$ therefore every point $\mathcal S$ lies on the orbit of a real  solution with invariant polynomial $\mathscr{R} (\lambda).$ Conversely, the orbit of any real solution with invariant polynomial $\mathscr{R} (\lambda)$ must lie entirely on $\mathcal S,$ because the reality conditions and $\mathscr{R}(\lambda)$ are invariants of the orbit. Hence $\mathcal S$ coincides with the union of all the real solution orbits having invariant polynomial $\mathscr{R}(\lambda).$ Equivalently, $\mathcal S$ coincides with the set of all initial data of orbits of real solutions whose invariant polynomial is $\mathscr{R}(\lambda).$
 The reality conditions define a closed subspace of the phase space, and the invariant polynomial condition is the common zero set of a finite collection of polynomials of the dynamical variables. Hence \(\mathcal S\) is closed. 

Moreover, the construction of the uniform  bounds in Theorem~\ref{globalsG}  shows that the bounds on the dynamical variables along a real orbit depend only on the invariant polynomial $\mathscr{R}(\lambda)$. Therefore, the bounds are uniform with respect to the points on the orbits of solutions having  the given invariant polynomial.   Therefore, \(\mathcal S\) is a closed  bounded subset of the finite-dimensional phase space. Therefore, \(\mathcal S\) is compact. 

Consider the projection of \(\mathcal S\) onto the \(f_{N-1}\)-coordinate of the finite-dimensional phase space. Since this projection is compact, there exists a point ${\bf s}^{\max}  \in \mathcal S$ maximizing $|f_{N-1}|.$ Note that ${\bf s}^{\max}$ is not necessarily unique, but the maximum value of $|f_{N-1}|$ for points in $\mathcal S$ is unique.  Define
${\bf y}^{\max} ({\bf t})$ to be the solution of the hierarchy flows with initial data ${\bf y}^{\max} ({\bf 0}) = {\bf s}^{\max}.$ And let $\phi_{x}^{\max} ({\bf t}) =- 2 i f_{N-1} ({\bf t})$ be the density of the corresponding global finite-gap solution of the sine-Gordon equation~(\ref{sG}) coming from the bounded global solution ${\bf y}^{\max} ({\bf t})$ of the hierarchy flows. Since every point of the orbit of ${\bf y}^{\max}( {\bf t})$ belongs to $\mathcal S,$ and the initial point was chosen to maximize $|f_{N-1}|$ on $\mathcal{S},$ it follows that
\begin{equation}
|\phi_{x}^{\max} (\bf{t} ) | \leq |\phi_{x}^{\max} (\bf{0})|,
\end{equation}
for all $\bf{t} \in \mathbb{R}^{N}.$
Therefore, \(\mathbf t=\mathbf 0\) is the location of a global maximum and a critical point of the real-valued function $\phi_{x}^{\max}({\bf t})=-2i f_{N-1}^{\max} ({\bf t}),$ with respect to all $N$ flows in the hierarchy.

Since ${\bf t} = {\bf 0}$ is a critical point of $-2i f_{N-1}^{\max} ({\bf t}),$ Theorem~\ref{criticalpoints} and $g_N =-h_N = i$ imply that
$G_N(\lambda)=- H_N(\lambda)$ at ${\bf t} = {\bf 0}.$ Therefore, at $\bf{t} = \bf{0},$ the sine-Gordon reality conditions imply that the coefficients of $G_{N} (\lambda)$ are purely imaginary, viz.,
\begin{equation}
G_N(\lambda)= -H_{N} (\lambda) = - \overline{ G ( \overline{\lambda} ) },
\end{equation}
and at ${\bf t} ={\bf 0}$ the invariant polynomial~(\ref{Reqn}) can be written as
\begin{equation}
\mathscr{R}(\lambda) = \lambda^2 F_{N-1}^2(\lambda) - \lambda G_{N}^2 (\lambda) =  \lambda \prod\limits_{j=1}^{2N} (\lambda - \lambda_j).
\end{equation}
If we define $\lambda=z^2,$ then we can factor this equation. 
By Definition~\ref{squareroots}, if $\lambda_j=-E_j^2$ is an invariant root, then the set $\mathcal E^+ = \{E_1, \ldots, E_{2N}\}$ consists of $E_j$ such that $\Im \left( E_j \right) >0.$ Recall that, in Theorem~\ref{globalsG}, we showed that all of the invariant roots $\lambda_j$ are nonnegative, so there is a unique $E_j$ with $\Im \left( E_j \right)>0$ for each $\lambda_j.$
After canceling out the root at $\lambda=0,$ we obtain
\begin{equation}\label{reducedRfocusing}
z^2F_{N-1}^2(z^2) -G_N^2(z^2) =\prod\limits_{j=1}^{2N}(z^2+E_j^2).
\end{equation}

Equation~(\ref{reducedRfocusing}) implies the factorization
\begin{equation}\label{factorfirst}
\left(z F_{N-1} (z^2) + G_{N} (z^2) \right) \left( z F_{N-1} (z^2) -  G_{N} (z^2) \right) = \prod\limits_{j=1}^{2N}(z^2+E_j^2).
\end{equation}
Suppose that $(z^2+E_j^2)$ were a factor of the first (or second) factor on the left-hand side of equation~(\ref{factorfirst}). That would imply that $z= \pm  i E_j$ were both roots of that factor and, hence, that $F(-E_j^2)=G(-E_j^2) =0.$ Equation~(\ref{reducedRfocusing}) would then force $\lambda=-E_j^2$ to be a double root of the invariant polynomial, which is not allowed.
Therefore, the factorization in equation~(\ref{factorfirst}) must have the form, using  the fact that $g_N =i,$ 
\begin{subequations}\label{factorAA}
\begin{align}
z F_{N-1} (z^2) +   G_N(z^2) &=   i\prod\limits_{j=1}^{2N} (z - \epsilon_j i E_j) \label{factorA} \\
z F_{N-1} (z^2) -   G_N (z^2) &=   -i\prod\limits_{j=1}^{2N} (z +\epsilon_j i E_j),\label{factorB}
\end{align}
\end{subequations}
where $\epsilon_j = \pm 1,$ for $j=1, \ldots, 2N.$ Notice that the factorization must have the given form, so the polynomials $F_{N-1}(z^2)$ and $G_{N}(z^2)$ are guaranteed to be well-defined by equations~(\ref{factorA}) and~(\ref{factorB}). However, we can also verify directly that the symmetric polynomials of even and odd degrees of the roots on each side of each equation give consistent expressions for $F_{N-1} (z^2)$ and $G_{N}(z^2)$.

However, the sine-Gordon reality conditions are not satisfied by every choice of the sign parameters $\epsilon_j.$ Consider the coefficient of $z^{2N-1}$ in equation~(\ref{factorA}) or~(\ref{factorB}), viz.,
\begin{equation}\label{leadingsum}
f_{N-1} = \sum\limits_{j=1}^{2N} \epsilon_j E_j,
\end{equation}
which must be purely imaginary. Since  the $\lambda_j=-E_j^2,$ for $j=1, \ldots, 2N,$ are either positive or complex-conjugate pairs, the $E_j$ are either purely imaginary or occur in pairs $E_j,-\overline{E_j},$ so  that the real part of the sum in equation~(\ref{leadingsum}) is zero if $\epsilon_j = 1,$ for all $j=1, \ldots, 2N.$ Therefore, the maximal bound of $|f_{N-1}|$ occurs when $\epsilon_{j} = 1,$ for $j=1, \ldots, 2N,$ or, equivalently, when $\epsilon_{j} = -1,$ for $j=1, \ldots, 2N,$ viz.,
\begin{equation}
|f_{N-1}^{\max}({\bf 0})|  = \sum\limits_{j=1}^{2N}  E_j = \sum\limits_{j=1}^{2N} \Im \left( E_j \right), \label{max1}
\end{equation}
since any other choice would reduce the sum by at least $2 \Im \left( E_k \right),$ for some $k=1, \ldots, 2N.$
This assumes that the distribution of the roots in the factorization corresponds to initial data in $\mathcal S.$ In other words, we assumed that the distribution of the roots in the factorization in equation~(\ref{factorAA}),   when $\epsilon_j=1$ for $j=1, \ldots, 2N,$ corresponds to initial data satisfying the sine-Gordon reality conditions.
To check that the sine-Gordon reality conditions are satisfied, solve equation~(\ref{factorAA}) when $\epsilon_{j}=1,$ for $j=1, \ldots, 2N,$ 
\begin{subequations}\label{solvedfactors}
\begin{align}
F_{N-1}(z^2) &= \frac{i}{2z} \left(\prod\limits_{j=1}^{2N} (z- i E_j) - \prod\limits_{j=1}^{2N} (z + i  E_j) \right) \\
G_{N} (z^2) &= \frac{i}{2} \left( \prod\limits_{j=1}^{2N} (z- i  E_j) + \prod\limits_{j=1}^{2N} (z + i E_j) \right).
\end{align}
\end{subequations}
Since the roots $E_j \in \mathcal E^+$ are either purely imaginary or occur in pairs $E_{j}, - \overline{E_{j}},$ label the elements of $\mathcal E^+$ as $E_{2k}=-\overline{E_{2k-1}},$ for $k=1, \ldots, M,$ and $E_{k}=-\overline{E_k},$ for $k=2M+1, \ldots, 2N,$ where $M$ is an integer between $0$ and $N.$ Then 
\begin{subequations}
\begin{align}
\prod_{j=1}^{2N} \left(z- i E_j \right) &= \prod_{k=1}^{M} \left(z^2 - i (E_{2k} - \overline{E_{2k}})z + |E_{2k}|^2 \right) \prod_{k=2M+1}^{2N} \left(z -  iE_k \right), \\
\prod_{j=1}^{2N} \left(z+ i E_j \right) &= \prod_{k=1}^{M} \left(z^2 + i(E_{2k} -\overline{E_{2k}})z + |E_{2k}|^2\right) \prod_{k=2M+1}^{2N} \left(z + i E_k \right). 
\end{align}
\end{subequations}
Therefore,
\begin{equation}
\overline{\prod_{j=1}^{2N} (\overline{z}- i E_j) }  = \prod_{j=1}^{2N} (z - i E_j), \qquad \overline{\prod_{j=1}^{2N} (\overline{z} + i E_j) }  = \prod_{j=1}^{2N} (z + iE_j).
\end{equation}
Then, equation~(\ref{solvedfactors}) implies 
\begin{equation}
F_{N-1} (z^2) = -\overline{F_{N-1} (\overline{z}^2)}, \qquad G_{N} (z^2) = -\overline{G_{N}(\overline{z}^2)}.
\end{equation}  
Since $G_N(z^2) = -H_N(z^2)$ at the critical point, we see that 
\begin{equation}
H_{N}(z^2) = \overline{G_{N}(\overline{z}^2)}.
\end{equation}
 In other words,  the sine-Gordon reality conditions are satisfied. Therefore, there is a point in $\mathcal S$ which corresponds to the maximal configuration of the roots in equation~(\ref{max1}).

Therefore, equation~(\ref{max1}) implies that the density of the finite-gap solution of the sine-Gordon equation corresponding to the maximal configuration, $\phi_{x}^{\max} ({\bf t})= 2i f_{N-1}^{\max}({\bf t}),$ satisfies
\begin{equation}
|\phi_{x}^{\max} ({\bf 0})| = 2 \sum\limits_{j=1}^{2N} \Im \left( E_j \right).
\end{equation}
Hence, for the global solution $\phi$ with the given invariant spectral polynomial,
\begin{equation}
|\phi_{x}(x,t)|
\le
2 \sum_{E \in\mathcal E^+}\Im \left(E \right),
\end{equation}
and the bound is sharp because it is attained by the density of the maximizing solution.
\end{proof}

\begin{theorem}[Sharp Upper Bound for the Density of the Sinh-Gordon Equation]\label{sharpshG}
Let
\[
y_j=f_{j-1}, \qquad
y_{N+j}=g_{j-1}, \qquad
y_{2N+j}=h_{j-1},
\qquad j=1,\ldots,N,
\]
be a compatible global solution of the \(N\) autonomous ordinary differential equations~\eqref{Fode}, with initial conditions satisfying the hypotheses of Theorem~\ref{globalshG}.
Then the density $\phi_x$ associated with the corresponding solution of the sinh-Gordon equation~(\ref{shG}) satisfies
\begin{equation}
|\phi_x(x,t)|
\le
2\sum_{j=1}^{N} \left( \beta_{2j} -  \beta_{2j-1} \right),
\end{equation}
where the roots of the invariant polynomial $\lambda_j = - \beta_j^2,$ $j=1, \ldots, 2N,$ are negative real numbers, and $\beta_j,$ $j=1, \ldots, 2N,$ are distinct positive numbers such that
\begin{equation}
0< \beta_1 < \beta_2 < \cdots < \beta_{2N-1} < \beta_{2N}.
\end{equation}  
Moreover, the bound is sharp, in the sense that there exists a solution whose density attains the upper bound.
\end{theorem}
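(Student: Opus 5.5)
The plan mirrors the proof of Theorem~\ref{sharpshG}'s sine-Gordon analogue, Theorem~\ref{sharpsG}. Fix the invariant polynomial $\mathscr R(\lambda)$ of the given solution. Let $\mathcal S$ be the set of phase-space points that satisfy the sinh-Gordon reality conditions, the normalization $g_0h_0=\frac1{16}$ with $g_0>0$, the identity~(\ref{Reqn}) with this $\mathscr R$, and the root-interlacing condition (b) of Theorem~\ref{globalshG}. The proof of Theorem~\ref{globalshG} shows that (b) is invariant along orbits, since roots of $G_N,H_N$ stay trapped in $[\lambda_{2j},\lambda_{2j-1}]$. That proof also shows that the bounds on the coefficients depend only on $\mathscr R$. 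Hence $\mathcal S$ is bounded, and it is closed because closed intervals are used in (b). So $\mathcal S$ is compact, and $|f_{N-1}|$ attains its maximum at some ${\bf s}^{\max}\in\mathcal S$. The orbit through ${\bf s}^{\max}$ stays in $\mathcal S$, so ${\bf t}={\bf 0}$ is a global maximum, hence a critical point, of $-2f_{N-1}$. Theorem~\ref{criticalpoints} then gives $G_N=H_N$ at ${\bf 0}$, because $g_N=h_N=1$.

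At the critical point, $\mathscr R(\lambda)=\lambda^2F_{N-1}^2+\lambda G_N^2=\lambda\prod_j(\lambda+\beta_j^2)$. I would substitute $\lambda=-z^2$ and cancel the root at $0$, which gives
\begin{equation*}
z^2F_{N-1}^2(-z^2)+G_N^2(-z^2)=\prod_{j=1}^{2N}(\beta_j^2-z^2),
\end{equation*}
up to an overall sign that I will fix. This factors as $(G_N(-z^2)+izF_{N-1}(-z^2))(G_N(-z^2)-izF_{N-1}(-z^2))$. Distinctness of the roots forbids both $\pm\beta_j$ from lying in the same factor, as in the sine-Gordon case. Therefore $G_N(-z^2)+izF_{N-1}(-z^2)=c\prod_j(z-\epsilon_j\beta_j)$ with $\epsilon_j=\pm1$, and the other factor is the image under $z\mapsto -z$. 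Reading off the $z^{2N-1}$ coefficient gives $f_{N-1}$ as $\pm\sum_j\epsilon_j\beta_j$, up to a unit factor. Reality of $F_{N-1}$ and $G_N$ is then automatic, since the $\beta_j$ are real.

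The main obstacle is that, unlike the sine-Gordon case, not every sign choice is admissible: condition (b) must hold. I expect (b) to force the alternating pattern $\epsilon_{2j}=-\epsilon_{2j-1}$, giving $|f_{N-1}|=\sum_j(\beta_{2j}-\beta_{2j-1})$. To see this, I would evaluate at real $z$, where $\lambda=-z^2<0$. There $G_N(-z^2)$ is the real part of $c\prod_j(z-\epsilon_j\beta_j)$ (with $c$ chosen real), and $zF_{N-1}$ is the imaginary part. I would then count the sign changes of $G_N$ on each interval $[\beta_{2j-1},\beta_{2j}]$ in $z$, which corresponds to $[\lambda_{2j},\lambda_{2j-1}]$. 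Requiring exactly one root of $G_N$ per interval should pin down the pairing of signs within each consecutive pair $(\beta_{2j-1},\beta_{2j})$. An alternative route is to show that interlacing forces $\Re$ and $\Im$ of $\prod_j(z-\epsilon_j\beta_j)$ to have the Hermite--Biehler interlacing pattern. Among admissible patterns, the maximum $\sum_j(\beta_{2j}-\beta_{2j-1})$ should be achieved by choosing within each pair the sign that makes the contribution $+(\beta_{2j}-\beta_{2j-1})$.

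Finally, for sharpness, I would take the alternating choice, define $F_{N-1},G_N$ by the explicit half-sum and half-difference formulas (as in~(\ref{solvedfactors})), and set $H_N=G_N$. I would then verify the reality conditions, the normalization $g_0=h_0=\prod_j\beta_j=\frac14>0$ (using $\prod\lambda_j=\frac1{16}$), and the interlacing (b), by checking the sign of $G_N$ at the endpoints $\lambda_{2j},\lambda_{2j-1}$ via $G_N^2=\mathscr R/\lambda$ there. This point lies in $\mathcal S$ and attains $|\phi_x|=2|f_{N-1}|=2\sum_j(\beta_{2j}-\beta_{2j-1})$.
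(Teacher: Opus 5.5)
Your outline matches the paper's: a compact invariant set $\mathcal S$, a maximizer of $|f_{N-1}|$ that is a critical point, $G_N=H_N$ there, a factorization indexed by signs $\epsilon_j$, selection of the admissible signs, maximization, and sharpness. However, the one step specific to the sinh-Gordon case is left unproved, and the tools you name for it do not work as written.

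First, with $\lambda=-z^2$ the reduced identity is $G_N(-z^2)^2-z^2F_{N-1}(-z^2)^2=\prod_j(z^2-\beta_j^2)$. The discrepancy is the sign of the $z^2F_{N-1}^2$ term, not an overall sign. So the factorization is the real one, $G_N(-z^2)\pm zF_{N-1}(-z^2)=Q(\pm z)$ with $Q(z)=(-1)^N\prod_j(z-\epsilon_j\beta_j)$, and $G_N(-z^2)$, $zF_{N-1}(-z^2)$ are the even and odd parts of $Q$. Your version $G_N+izF_{N-1}=c\prod_j(z-\epsilon_j\beta_j)$ with $c$ real is inconsistent: the right side is real on the real axis, which would force $F_{N-1}\equiv 0$. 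The paper avoids this by using $\lambda=z^2$ and roots $\pm i\beta_j$.

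More seriously, ``I expect (b) to force $\epsilon_{2j}=-\epsilon_{2j-1}$'' and ``should pin down the pairing'' are exactly the claim that needs proof. Your sharpness check of the sign of $G_N$ at the endpoints ``via $G_N^2=\mathscr R/\lambda$'' cannot supply it. A square carries no sign information, and the relation is false at the branch points, where $\mathscr R(\lambda_k)=0$ but $G_N(\lambda_k)\neq 0$.

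The missing computation is short, and it makes your sign-counting idea rigorous. Since $Q(\epsilon_k\beta_k)=0$, evaluating the even part at $z=\beta_k$ gives $G_N(\lambda_k)=\tfrac12 Q(-\epsilon_k\beta_k)=(-1)^N\epsilon_k\beta_k\prod_{j\neq k}(\epsilon_k\beta_k+\epsilon_j\beta_j)$. Because of the ordering of the $\beta_j$, $\operatorname{sgn}(\epsilon_k\beta_k+\epsilon_j\beta_j)$ is $\epsilon_k$ for $j<k$ and $\epsilon_j$ for $j>k$. Hence
\[
\operatorname{sgn}G_N(\lambda_k)=(-1)^N\epsilon_k^{\,k}\prod_{j>k}\epsilon_j,
\qquad
\operatorname{sgn}\bigl(G_N(\lambda_{2j-1})\,G_N(\lambda_{2j})\bigr)=\epsilon_{2j-1}\epsilon_{2j}.
\]
Since $\deg G_N=N$ and $G_N(\lambda_k)\neq 0$, condition (b) is equivalent to $G_N(\lambda_{2j-1})G_N(\lambda_{2j})<0$ for all $j$. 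That in turn is equivalent to $\epsilon_{2j-1}\epsilon_{2j}=-1$ for all $j$. Then $f_{N-1}=\sum_j\epsilon_{2j}(\beta_{2j}-\beta_{2j-1})$ and $g_0=Q(0)=\prod_j\beta_j=\tfrac14$.

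Once this is supplied, your route simplifies the paper's. The paper proves necessity of the alternating pattern by deforming each gap to a point and reading the signs off the degenerate factorization. It then proves sufficiency separately, by an intermediate-value argument for the single choice $\epsilon_{2j}=1$. The endpoint-sign identity above gives necessity and sufficiency in one stroke and removes the need for the deformation.
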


\begin{proof}
Fix $\mathscr R(\lambda)$ the invariant polynomial of the given global solution. 
Let $\mathcal S$ denote the set of all points in the phase space of the commuting hierarchy satisfying the hypotheses of Theorem~\ref{globalshG} and having invariant polynomial $\mathscr{R}(\lambda).$ Since the reality conditions, the invariant polynomial with its the normalization condition, and the one-root-per-gap property are all preserved by the hierarchy flows, $\mathcal S$ is invariant under the commuting hierarchy. 
Consequently, $\mathcal S$ coincides with the union of all hierarchy orbits satisfying the conditions of Theorem~\ref{globalshG} and having invariant polynomial $\mathscr{R}(\lambda).$

The invariant polynomial condition with its normalization condition is the common zero set of a finite collection of polynomials of the dynamical variables. The reality conditions define a closed linear subspace of the phase space. The condition that the degree $N$ polynomials $G_{N}(\lambda)$ and $H_{N}(\lambda)$ possess exactly one root in each closed interval $[\lambda_{2j},\lambda_{2j-1}]$ is preserved under limits because roots depend continuously on coefficients and are trapped in the intervals by the reality conditions and equation~(\ref{rneg}). Thus $\mathcal S$ is a closed set under the  hypotheses of Theorem~\ref{globalshG}.

By Theorem~\ref{globalshG}, the dynamical variables admit bounds depending only on the invariant polynomial $\mathscr{R}(\lambda)$. Since $\mathscr{R}(\lambda)$ is fixed and invariant on $\mathcal S,$ these bounds are uniform over all points of $\mathcal S.$ Therefore $\mathcal S$ is bounded. 
 Therefore, \(\mathcal S\) is a closed  bounded subset of the finite-dimensional phase space. Therefore, \(\mathcal S\) is compact.

Consider the projection of \(\mathcal S\) onto the \(f_{N-1}\)-coordinate of the finite-dimensional phase space. Since this projection is compact, there exists a point ${\bf s}^{\max}  \in \mathcal S$ maximizing $|f_{N-1}|.$ Note that ${\bf s}^{\max}$ is not necessarily unique, but the maximum value of $|f_{N-1}|$ for points in $\mathcal S$ is unique.  Define
${\bf y}^{\max} ({\bf t})$ to be the solution of the hierarchy flows with initial data ${\bf y}^{\max} ({\bf 0}) = {\bf s}^{\max}.$ And let $\phi_{x}^{\max} ({\bf t}) = -2  f^{\max}_{N-1} ({\bf t})$ be the density of the sinh-Gordon equation~(\ref{shG}) coming from the bounded global solution ${\bf y}^{\max} ({\bf t})$ of the hierarchy flows. Since every point of the orbit of ${\bf y}^{\max}( {\bf t})$ belongs to $\mathcal S,$ and the initial point was chosen to maximize $|f_{N-1}|$ on $\mathcal{S},$ it follows that
\begin{equation}
|\phi_{x}^{\max} (\bf{t} ) | \leq |\phi_{x}^{\max} (\bf{0})|,
\end{equation}
for all $\bf{t} \in \mathbb{R}^{N}.$
Therefore, \(\mathbf t=\mathbf 0\) is the location of a global maximum of $|\phi_x^{\max} ({\bf t})|$ and a critical point of the real-valued function $\phi_{x}^{\max}({\bf t})=-2 f^{\max}_{N-1} ({\bf t})$ with respect to all $N$ flows in the hierarchy.

Since ${\bf t} = {\bf 0}$ is a critical point of $\phi_{x}^{\max} ({\bf t})= -2 f^{\max}_{N-1} ({\bf t}) ,$ Theorem~\ref{criticalpoints} and $g_N = h_N = 1$ imply that $G_N(\lambda)=H_N(\lambda)$ at ${\bf t} = {\bf 0}.$ Therefore, at $\bf{t} = \bf{0},$ the invariant polynomial can be written as
\begin{equation}\label{RR}
\mathscr{R}(\lambda) = \lambda^2 F_{N-1}^2(\lambda) + \lambda G_{N}^2 (\lambda) =  \lambda \prod\limits_{j=1}^{2N} (\lambda - \lambda_j).
\end{equation}
Under  our hypotheses, the invariant polynomial has negative real roots $\lambda_j = - \beta_j^2,$ for $j=1, \ldots, 2N,$ and 
\begin{equation}
0< \beta_1 < \beta_2 < \cdots < \beta_{2N-1} < \beta_{2N},
\end{equation}  
where the $\beta_j,$ $j=1, \ldots, 2N,$ satisfy the normalization condition,
\begin{equation}\label{betanormal}
\prod\limits_{j=1}^{2N} \beta_{j} = \frac{1}{4}.
\end{equation}
Setting $\lambda=z^2,$ the invariant polynomial in
equation~(\ref{RR}) reduces to 
\begin{equation}\label{reducedR}
z^2F_{N-1}^2(z^2) +G_N^2(z^2) =\prod\limits_{j=1}^{2N}(z^2+\beta_j^2).
\end{equation}
Notice that a factorization of equation~(\ref{reducedR}) containing a factor with roots distributed as
\begin{equation}\label{ff}
z F_{N-1} (z^2) + i G_{N}(z^2) = i (z^2 + \beta_k^2)\prod\limits_{j \neq k}^{2N} (z - i \epsilon_{j} \beta_j),
\end{equation}
where $\epsilon_{j} = \pm 1,$ for $j=1, \ldots, 2N,$  is not possible, because substituting $z=i \epsilon_k \beta_k$ into equation~(\ref{ff}) implies that
\begin{equation}\label{eqnep}
i \epsilon_k \beta_k F_{N-1} (-\beta_k^2) + i G_{N} (- \beta_k^2) = 0.
\end{equation}
Since $\epsilon_k = \pm 1$ and $\beta_k >0,$ equation~(\ref{eqnep}) implies that $F_{N-1} (-\beta_k^2) = G_{N}(-\beta_k^2) = 0.$ In that case, equation~(\ref{reducedR}) would imply  $\lambda= - \beta_k^2$ was a double root of the invariant polynomial $\mathscr{R}(\lambda)$, which is not possible under our assumptions.

Therefore, the factorization of equation~(\ref{reducedR}) must have the form
\begin{subequations}\label{factor00}
\begin{align}
z F_{N-1} (z^2) + i G_{N}(z^2) &= i \prod\limits_{j =1}^{2N} (z - i \epsilon_{j} \beta_j), \label{factor11} \\
z F_{N-1} (z^2) - i G_{N}(z^2) &= -i\prod\limits_{j =1}^{2N} (z + i \epsilon_{j} \beta_j). \label{factor22}
\end{align}
\end{subequations}
By considering the symmetric polynomials of even and odd degrees on both sides of equations~(\ref{factor11}) and~(\ref{factor22}), we see that the coefficients of $F_{N-1}(z^2)$ and $G_{N}(z^2)$ are real, as required by the sinh-Gordon reality conditions, for any choice of the sign parameters $\epsilon_{j} = \pm 1$.
However, we must also check that the roots of $G_{N}(\lambda)$ lie in the correct intervals.  

To determine the permissible signs $\epsilon_j = \pm 1$ in the factorization, we will use a deformation argument on the polynomials in equation~(\ref{factor00}). Allow the parameters $\beta_{2j-1}$ and $\beta_{2j}$ to continuously change in such a way that they remain distinct but each pair $\beta_{2j-1}$ and $\beta_{2j}$ coalesces in the limit.
Any continuous deformation that keeps the intervals $[-\beta_{2j}^2, -\beta_{2j-1}^2]$ distinct and bounded away from zero will work, e.g., 
\begin{equation}\label{deform}
(\beta_{2j-1}, \beta_{2j}) \mapsto \left((1-s)\beta_{2j-1} + s \beta_{2j}, \beta_{2j} \right),
\end{equation}
for $j=1, \ldots, N,$
will accomplish the desired deformation for $s \in [0,1).$ 

Assume, when $s=0,$ $G_{N}(\lambda)=H_{N}(\lambda)$ have exactly one root in each of the gaps $[-\beta_{2j}^2,-\beta_{2j-1}^2].$ For each $s \in [0,1),$ define the coefficients of the polynomials $F_{N-1}$ and $G_{N}=H_{N}$ by the factorization equations~(\ref{factor00}). We are defining and deforming the polynomials $F_{N-1}$ and $G_{N}=H_{N}$ directly through equation~(\ref{factor00}) and independently of the critical-point interpretation.  Therefore, for each $s \in [0,1),$ the coefficients of $F_{N-1}$ and $G_{N}=H_{N}$ are all real, and equation~(\ref{RR}) is true for each $s \in [0,1).$ Equation~(\ref{RR}) implies that $\lambda=\lambda_j$ can never be a root of $G_{N}(\lambda),$ since then it would also be a root of $F_{N-1}(\lambda)$ and, hence, a double root of $\mathscr{R}(\lambda),$ which is not possible in our construction. Thus, the roots of $G_{N}(\lambda)$ will vary continuously with $s \in [0,1),$ but they remain trapped in their respective gaps. 
Since the $\epsilon_{j}$ are discrete parameters, they remain constant throughout the deformation.

Passing to the limit of the deformation  at $s=1,$ the roots of $G_{N}(\lambda)$ collapse onto $\lambda_j=-\beta_{2j-1}^2 = -\beta_{2j}^2,$ for $j=1, \ldots, N.$ Moreover, these roots are distinct, viz., $\beta_{2j} \neq \beta_{2k},$ for all $j \neq k.$
Hence, in the limit at $s=1,$ equation~(\ref{reducedR}) implies that
\begin{equation}
z^2F_{N-1}^2(z^2) +\prod\limits_{j=1}^{N} (z^2+\beta_{2j}^2)^2 =\prod\limits_{j=1}^{N}(z^2+\beta_{2j}^2)^2 \Rightarrow F_{N-1}(z^2) \equiv 0.
\end{equation}
In the limit at $s=1,$ equation~(\ref{factor11}) and $F_{N-1} (E^2) \equiv 0$ imply
\begin{equation}\label{quadraticform}
  \prod\limits_{j=1}^{N} \left( z^2 + \beta_{2j}^2 \right) =  \prod\limits_{j=1}^{N} (z - i \epsilon_{2j-1} \beta_{2j})(z - i \epsilon_{2j} \beta_{2j}).
\end{equation}
Now suppose that $\epsilon_{2j-1}=\epsilon_{2j}$ in equation~(\ref{quadraticform}). Then the right-hand side of equation~(\ref{quadraticform}) has a double root at $z= i \epsilon_{2j} \beta_{2j}.$ But the factor $z^2+\beta_{2j}^2$ on the left-hand side contributes roots $z=\pm i \beta_{2j},$ which cannot both be equal to $z=i \epsilon_{2j} \beta_{2j},$ since $\beta_{2j} >0.$ Therefore, $\beta_{2j} = \beta_{2k}$ for some $ k \neq j,$ which is a contradiction, since the deformation~(\ref{deform}) avoids this possibility, even in the limit at $s=1.$ Thus,  permissible factorizations must satisfy $\epsilon_{2j-1} = - \epsilon_{2j}.$

Therefore, equating coefficients of $z^{2N-1}$ in the factorization at ${\bf t} = {\bf 0}$ given by equation~(\ref{factor11}) (or equation~(\ref{factor22})), we conclude that the maximal density is given by
\begin{equation}\label{maxshGstep}
|\phi_{x}^{\max} ({\bf 0})| =|2 f_{N-1}^{\max} ({\bf 0})| = 2\sum\limits_{j=1}^{N} \epsilon_{2j} (\beta_{2j} - \beta_{2j-1}).
\end{equation} 
for some choice of $\epsilon_{2j} = \pm 1,$ for $j=1, \ldots, N.$
Clearly the choice $\epsilon_{2j} =1,$ for $j=1, \ldots, N,$ maximizes the above sum because $\beta_{2j-1} < \beta_{2j},$ and any other choice would reduce the sum in equation~(\ref{maxshGstep}) by at least $4 (\beta_{2j} - \beta_{2j-1})$ for some $j \in \{1, \ldots, N \}.$ 

Therefore, to complete the proof,  it is sufficient to show that, if $\epsilon_{2j} = - \epsilon_{2j-1} = 1,$ for $j=1, \ldots, N,$  then the initial data for the coefficients of $F_{N-1}(\lambda)$ and  $G_{N}(\lambda)=H_{N}(\lambda)$ defined by the factorization in equation~(\ref{factor00}) produces a 
$G_{N}(\lambda)$ which has exactly one root in each of the $N$ finite gaps. Setting $\epsilon_{2j}=-\epsilon_{2j-1}=1$ in equation~(\ref{factor00}), define the polynomial
\begin{equation}
P(z) =z F_{N-1} (z^2) + i G_{N}(z^2) = i \prod\limits_{j =1}^{N} (z - i  \beta_{2j})(z+i \beta_{2j-1} ).
\end{equation}
Then
\begin{equation}\label{gp}
G_{N}(z^2) = \frac{1}{2i} ( P(z) + P(-z) )
\end{equation}
because $G_{N}(z^2)$ is completely determined by the even powers of $P(z).$ 
Then, for $\eta \in \mathbb{R},$ define
\begin{equation}
T(\eta) = i P(i \eta) = \prod\limits_{j=1}^{N} (\eta - \beta_{2j})(\eta + \beta_{2j-1}),
\end{equation} 
so substituting $z=i \eta$ in equation~(\ref{gp}) produces
\begin{equation}
G_{N}(-\eta^2) =-\frac{1}{2} (T(\eta) + T(-\eta)).
\end{equation}

Now $T(\eta)$ is an even polynomial with real coefficients and simple roots at 
\begin{equation}\label{signseq}
-\beta_{2N-1} <  \ldots < -\beta_{1} < \beta_{2} < \beta_{4} < \ldots < \beta_{2N}.
\end{equation}
In particular, the sign of $T(\eta)$ is positive as $\eta \rightarrow \pm \infty,$ and the sign alternates between the consecutive roots.
Therefore
\begin{equation}
G_{N} (-\beta_{2j}^2) G_{N} (-\beta^2_{2j-1}) =\frac{1}{4} T(-\beta_{2j}) T(\beta_{2j-1}) <0,
\end{equation}
because  $T(\beta_{2j})=T(-\beta_{2j-1})=0$ and
\begin{subequations}
\begin{align}
\operatorname{sgn} T(-\beta_{2j})
&=
(-1)^{N+j},\\
\operatorname{sgn} T(\beta_{2j-1})
&=
(-1)^{N+j+1},
\end{align}
\end{subequations}
for $j=1, \ldots, N.$
Thus,  $G_{N} (\lambda)$ changes sign at the endpoints of each interval $[-\beta^2_{2j},  -\beta^2_{2j-1}],$ for $j=1, \ldots, N,$ and the intermediate value theorem implies that $G_{N}(\lambda)$ has at least one root in each of the $N$ finite gaps. Since $G_{N}(\lambda)$ has degree $N$, we conclude that there
is exactly one root of $G_{N}(\lambda)$ in each of the required intervals with this initial data. 

Therefore, initial data specified by the factorization in equation~(\ref{factor00}) with $\epsilon_{2j}=-\epsilon_{2j-1}=1,$ $j=1, \ldots, N,$ satisfy all the hypotheses of Theorem~\ref{globalshG}. The corresponding global solution of the hierarchy defines the function $\phi$ that generates a finite-gap solution of the sinh-Gordon equation~(\ref{shG}) through time reversal. The corresponding density $\phi_x$  is the maximizing density $\phi_{x}^{\max}$ which
satisfies equation~(\ref{maxshGstep}).
Hence, for  $\phi$ corresponding to an arbitrary hierarchy solution specified by the statement of the theorem, its density $\phi_x$ must satisfy the inequality
\begin{equation}
|\phi_{x}(x,t)|
\le
2\sum\limits_{j=1}^{N} \left( \beta_{2j} - \beta_{2j-1} \right),
\end{equation}
and the bound is sharp because it is attained by the maximizing density.
\end{proof}

\section{Conclusion}

In this paper we establish a sharp upper bound for the densities of finite-gap solutions of the sine-Gordon equation. 
The proof is based entirely on commuting polynomial flows, an invariant spectral polynomial, and elementary algebraic properties of the associated dynamical variables. The critical-point analysis leads to a natural factorization of the invariant spectral polynomial in the maximizing configuration, allowing the sharp upper bound to be obtained without explicitly constructing the finite-gap solution. The resulting density formula is expressed solely in terms of the upper-half-plane square roots of the negated roots of the invariant polynomial and is shown to be sharp by constructing a maximizing configuration.

More fundamentally, the proof demonstrates that  the optimization mechanism is a property of the common finite-dimensional hierarchy shared by the modified Korteweg-de Vries equation and the sine(sinh)-Gordon equations. Although the sine-Gordon, sinh-Gordon, and modified Korteweg-de Vries equations arise through different hierarchy reductions, reality conditions, and time flows, the same critical-point mechanism produces an analogous factorization of the invariant spectral polynomial and identifies the maximizing configuration determined entirely by the spectral invariants. The results demonstrate that the optimization problem can be solved directly at the level of the finite-dimensional hierarchy, without explicit integration of the finite-gap equations. This suggests that the same optimization mechanism extends naturally to other reductions of the same hierarchy.

\appendix

\renewcommand{\appendixname}{Appendix}

\section{Relation to the AKNS Spectral Problem}
The AKNS formulation~\cite{akns 73a, akns 74, ablo 81} of the spectral problem for the sine(sinh)-Gordon equations~(\ref{sG}, \ref{shG}) studies the linear eigenvalue problem
\begin{equation} \label{spec1}
\frac{\partial}{\partial x} \left( \begin{array}{c} \psi_1 \\ \psi_2 \end{array} \right) = U \left( \begin{array}{c} \psi_1 \\ \psi_2 \end{array} \right),
\end{equation}
where
\begin{equation}
	U_{sG} = \left(\begin{array}{cc} -i E & -\frac{1}{2} \phi_x \\ \frac{1}{2} \phi_x & i E \end{array} \right), \qquad U_{shG} = \left(\begin{array}{cc} -i E & \frac{1}{2} \phi_x \\ \frac{1}{2} \phi_x & i E \end{array} \right),
\end{equation}
and the matrices $U_{sG}$ and $U_{shG}$ correspond to the sine-Gordon equation~(\ref{sG}) and the sinh-Gordon equation~(\ref{shG}), respectively.

In the finite hierarchy used in this paper, the corresponding spatial spectral problem comes from the $t_0=x$ flow, viz.,
\begin{equation} \label{spec2}
\frac{\partial}{\partial x} \left( \begin{array}{c} \psi_1 \\ \psi_2 \end{array} \right) = V^{(0)} \left(\begin{array}{c} \psi_1 \\ \psi_2 \end{array} \right),
\end{equation}
where
\begin{equation}
V^{(0)}_{sG} = \left(\begin{array}{cc} -\frac{i}{2} \phi_x & i \\ -i \lambda & \frac{i}{2} \phi_x \end{array}\right), \qquad V^{(0)}_{shG} = \left(\begin{array}{cc} \frac{1}{2} \phi_x & 1 \\ \lambda & -\frac{1}{2} \phi_x \end{array}\right),
\end{equation}
and the matrices $V^{(0)}_{sG}$ and $V^{(0)}_{shG}$ correspond to the sine-Gordon equation~(\ref{sG}) and the sinh-Gordon equation~(\ref{shG}), respectively.

The gauge equivalence established below identifies the AKNS spectral parameter $E$ with the principal-grading spectral parameter $\lambda$ through the relation $\lambda=-E^2.$

\begin{proposition}[Gauge Equivalence of the Principal-Grading and AKNS Spectral Problems]
For $E \neq 0,$ the principal-grading spectral operator $V^{(0)}_{sG}$ is gauge equivalent to the AKNS spectral operator $U_{sG},$  viz.,
\begin{equation}
V^{(0)}_{sG} = G^{-1} U_{sG} G - G^{-1} G_{x},
\end{equation}
where
\begin{equation}
G  = \left( \begin{array}{cc} -E & 1 \\ -i E & - i \end{array} \right), \qquad \lambda = -E^2.
\end{equation}
Also, for $E \neq 0,$ the principal-grading spectral operator $V^{(0)}_{shG}$ is gauge equivalent to the AKNS spectral operator $U_{shG},$ viz.,
\begin{equation}
V^{(0)}_{shG} = G^{-1} U_{shG} G - G^{-1} G_{x},
\end{equation}
where
\begin{equation}
G  = \left( \begin{array}{cc} -iE & 1 \\ -i E & - 1 \end{array} \right), \qquad \lambda = -E^2.
\end{equation} 
\end{proposition}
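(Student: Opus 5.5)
The plan is to verify both gauge relations directly. Since $G$ does not depend on $x$ (it depends only on $E$), we have $G_x=0$. The claim therefore reduces to the purely algebraic identity
\begin{equation*}
G\,V^{(0)} = U\,G ,
\end{equation*}
which avoids computing $G^{-1}$ explicitly. Invertibility still has to be checked. For the sine-Gordon matrix, $\det G = iE + iE = 2iE$. For the sinh-Gordon matrix, $\det G = iE + iE = 2iE$ as well. Both are nonzero exactly when $E\neq 0$, which is the stated hypothesis, so $G^{-1}U G$ is well defined and the identity $GV^{(0)}=UG$ is equivalent to $V^{(0)}=G^{-1}UG$.

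For the sine-Gordon case, write $p=\tfrac12\phi_x$ and substitute $\lambda=-E^2$, so that $-i\lambda = iE^2$ in $V^{(0)}_{sG}$. I will compute the four entries of each product.

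The product $GV^{(0)}_{sG}$ has:
\begin{itemize}
\item first row: $iEp + iE^2$ and $-iE + ip$;
\item second row: $-Ep + E^2$ and $E + p$.
\end{itemize}

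The product $U_{sG}G$ has:
\begin{itemize}
\item first row: $iE^2 + iEp$ and $-iE + ip$;
\item second row: $-Ep + E^2$ and $p + E$.
\end{itemize}

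The two products agree entry by entry. The only ingredient beyond bookkeeping is the relation $\lambda=-E^2$, which enters through the lower-left entry of $V^{(0)}_{sG}$.

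The sinh-Gordon case is handled identically. Use the second $G$, set $p=\tfrac12\phi_x$, and replace $\lambda$ by $-E^2$ in $V^{(0)}_{shG}$. Then compute $GV^{(0)}_{shG}$ and $U_{shG}G$ entrywise and compare.

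The only obstacle I anticipate is sign bookkeeping with the factors of $i$. The sinh-Gordon case carries the most risk, because the conjugations in $U_{shG}$ and $V^{(0)}_{shG}$ have different sign patterns. If some entry fails to match, the first thing I would re-examine is the choice of square-root branch, $E\mapsto -E$, in the definition of $G$.
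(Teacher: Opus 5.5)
Your proposal is correct and follows the paper's route. Both arguments use $G_x=0$ and verify the conjugation by direct computation; your only variation is to check $GV^{(0)}=UG$ together with $\det G=2iE\neq 0$ instead of forming $G^{-1}$, and all your sine-Gordon entries are right. The sinh-Gordon check you deferred also goes through with no change of branch: with $p=\tfrac12\phi_x$ and $\lambda=-E^2$, both $GV^{(0)}_{shG}$ and $U_{shG}G$ equal $\left(\begin{array}{cc} -E^2-iEp & -iE-p \\ E^2-iEp & p-iE \end{array}\right)$.
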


\begin{proof} In both cases, $G$ is independent of $x,$ so $G_x=0.$ In the sine-Gordon case, direct computation shows that
\begin{equation}
G^{-1} U_{sG} G = \left( \begin{array}{cc} -\frac{i}{2} \phi_x & i \\ i E^2 & \frac{i}{2} \phi_x \end{array} \right) = V^{(0)}_{sG},
\end{equation}
provided that $\lambda= - E^2.$

Similarly, in the sinh-Gordon case, direct computation shows that
\begin{equation}
G^{-1} U_{shG} G = \left( \begin{array}{cc} \frac{1}{2} \phi_x & 1 \\ - E^2 & -\frac{1}{2} \phi_x \end{array} \right) = V^{(0)}_{shG},
\end{equation}
provided that $\lambda= - E^2.$
\end{proof}

Consequently, the complex parameter $E$ appearing in the sharp upper bound theorem for the sine-Gordon equation is precisely the AKNS spectral parameter corresponding to the principal-grading spectral parameter $\lambda =-E^2.$ Similarly, the real parameter $\beta$ in the sharp upper bound theorem for the sinh-Gordon equation is the AKNS spectral parameter corresponding to the principal-grading spectral parameter $\lambda=-\beta^2.$

\section{Genus-One Examples and Soliton Limits}
This appendix illustrates the sharp upper bound formulas of Section 3 in the genus-one case. These calculations verify that the genus-one reduction of the finite-gap hierarchy reproduces the sharp upper bound formulas and that the classical one-soliton solutions arise in the appropriate degenerate  limit of the spectral invariants.

\subsection{Sine-Gordon Equation}
Theorem~\ref{equationtheorem} shows that, for the sine-Gordon equation with $N=1,$ the polynomial ansatz for the dynamical variables is
\begin{subequations}
\begin{align}
F_{0} (\lambda) &=  \frac{i}{2} \phi_x, \\
G_{1} (\lambda) &= i (\lambda - \mu), \\
H_{1} (\lambda) &= -i (\lambda - \overline{\mu}),
\end{align}
\end{subequations}
where a solution to the sine-Gordon equation is generated by the real-valued function $\phi$ defined by
\begin{equation}
g_0 = - i \mu = - \frac{i}{4} e^{- i \phi},
\end{equation}
and $\mu \in \mathbb{C}$ is a convenient change of variable.

The invariant polynomial in equation~(\ref{Reqn}) is
\begin{equation}\label{sGcurvesoliton}
-\frac{1}{4}\lambda^2 \phi_x^2 + \lambda (\lambda - \mu) (\lambda - \overline{\mu}) = \lambda (\lambda - \lambda_1) (\lambda - \lambda_2).
\end{equation}
Assuming $\lambda_1 \neq \lambda_2,$ substitution of $\lambda=\lambda_1$ or $\lambda=\lambda_2$ into equation~(\ref{sGcurvesoliton}) shows that neither $\lambda_1$ nor $\lambda_2$ can be a negative real number. Assuming that $\lambda_1$ and  $\lambda_2$ are distinct and nonzero, they are either two distinct positive numbers or a complex-conjugate pair. Denote the symmetric polynomials of the roots by $s_1 = \lambda_1 + \lambda_2$ and $s_2 = \lambda_1\lambda_2.$ Then the normalization condition is
\begin{equation}
s_2 = \frac{1}{16}.
\end{equation}
We can use equation~(\ref{sGcurvesoliton}) to obtain a quadratic equation for $\mu.$ The explicit solution is
\begin{equation}\label{musolved}
\mu = \frac{1}{2} \left(  s_1 -\frac{1}{4} \phi^{2}_x \pm  \sqrt{\left(\frac{1}{4}  \phi_x^2 -s_1\right)^2 - 4 s_2} \right).
\end{equation}
The reality condition requires that the solutions of equation~(\ref{musolved}) be a complex-conjugate pair, viz.,
\begin{equation}
\left(\frac{1}{4} \phi_x^2 -s_1 \right)^2  - 4 s_2 \leq 0,
\end{equation}
which implies that
\begin{equation}\label{solveinequality}
s_1 - 2 \sqrt{s_2} \leq \frac{1}{4} \phi_x^2 \leq s_1 + 2 \sqrt{s_2}.
\end{equation}
Writing $\lambda_1=-E_1^2$ and $\lambda_2 = -E_2^2,$ there are two cases: (i) $E_1= i \beta_1$ and $E_2 = i \beta_2,$ with $\beta_1, \beta_2 >0,$ and (ii) $E_1=\alpha + i \beta$ and $E_2 = -\alpha + i \beta,$ with $\beta >0.$
In the first case (two positive real roots $\lambda_1, \lambda_2$),  equation~(\ref{solveinequality}) implies 
\begin{equation}
|\phi_x| \leq 2(\beta_1 + \beta_2) = 2 \left( \Im E_1  + \Im E_2 \right).
\end{equation}
In the second case (complex-conjugate roots $\lambda_2 = \overline{\lambda_1})$, equation~(\ref{solveinequality}) implies
\begin{equation}
|\phi_x| \leq 4 \beta = 2 \left( \Im E_1 + \Im E_2 \right).
\end{equation}
In both cases, we obtain the correct sharp upper bound formula,
\begin{equation}
|\phi_x| \leq 2 \sum\limits_{E \in \mathcal E^+} \Im \left(E\right).
\end{equation}

To obtain the density of the classical kink soliton, consider the dynamical equation in equation~(\ref{zeroflowA}),
\begin{equation}
(f_0)_x = -i g_0- i h_0,
\end{equation}
which becomes
\begin{equation}\label{dynsG1}
\phi_{xx} =  2 i (\mu - \overline{\mu})=\pm 2 \sqrt{4 s_2 - \left(\frac{1}{4} \phi_x^2 - s_1 \right)^2}.
\end{equation}
The sharp upper bound is attained by an equilibrium solution for $\phi_x$ of equation~(\ref{dynsG1}), as well as a non-constant solution that oscillates between two equilibria.
In the degenerate case where $\lambda_1=\lambda_2= -E_1^2=-E_2^2 >0,$ we have $E_1 = E_2 =  i \beta,$ with $\beta>0.$  The normalization condition is $s_2=\frac{1}{16},$ so that  $\beta = \frac{1}{2}$ and $s_1= \frac{1}{2}.$ Setting $w=\phi_x,$ 
equation~(\ref{dynsG1}) becomes
\begin{equation}\label{weqn}
w_x = \pm \frac{1}{2} w \sqrt{4 - w^2}.
\end{equation}
The smooth solution of equation~(\ref{weqn}) is the  kink density,
\begin{equation}\label{kinkexample}
\phi_x = \pm 2 \sech (x- x_0),
\end{equation}
of the classical kink profile,
\begin{equation}
\phi = \pm 4 \arctan \left( e^{(x-x_0)} \right).
\end{equation}
The oscillatory orbits of $\phi_x$ degenerate into two homoclinic orbits~(\ref{kinkexample}) which connect the equilibrium at $w=0$ through the  turning points of equation~(\ref{weqn}) at $w=\pm 2.$
The kink density attains the limiting sharp upper bound of 
\begin{equation}
|\phi^{\max}_x (x_0)| = 2(\Im E_1+ \Im E_2) =4 \beta = 2.
\end{equation}

\subsection{Sinh-Gordon Equation}
Theorem~\ref{equationtheorem} shows that, for the sinh-Gordon equation with $N=1,$ the polynomial ansatz for the dynamical variables is
\begin{subequations}
\begin{align}
F_{0} (\lambda) &=   -\frac{1}{2} \phi_x, \\
G_{1} (\lambda) &=  \lambda - \mu_1, \\
H_{1} (\lambda) &=  \lambda - \mu_2,
\end{align}
\end{subequations}
where a solution to the sinh-Gordon equation is generated (under time reversal) by the real-valued function $\phi$ defined by
\begin{equation}
g_0 = -  \mu_1 =  \frac{1}{4} e^{\phi},
\end{equation}
and $\mu_1 \in \mathbb{R}$  is a convenient change of variable. Also, $h_0 = - \mu_2 = \frac{1}{4} e^{-\phi},$ for $\mu_2 \in \mathbb{R},$ by the normalization condition. 

The invariant spectral polynomial in equation~(\ref{Reqn}) is
\begin{equation}\label{shGcurve}
\frac{1}{4} \lambda^2  \phi_x^2  + \lambda (\lambda - \mu_1) (\lambda - \mu_2) = \lambda (\lambda - \lambda_1) (\lambda - \lambda_2),
\end{equation}
where we assume that $\lambda_1$ and $\lambda_2$ are distinct and nonzero.
We can use equation~(\ref{shGcurve}) to obtain a quadratic equation with roots $\mu=\mu_1, \mu_2.$ 
Denote the symmetric polynomials of the invariant roots as $s_1 = \lambda_1 + \lambda_2$ and $s_2 = \lambda_1 \lambda_2  = \frac{1}{16}$ (by the normalization condition).
The explicit solution for $\mu=\mu_1, \mu_2,$ is
\begin{equation}
\mu = \frac{1}{2} \left( \frac{1}{4} \phi_x^2 +s_1 \pm  \sqrt{\left(\frac{1}{4}\phi_x^2+s_1 \right)^2-4 s_2} \right).
\end{equation}
The reality condition is that the discriminant of the quadratic polynomial be nonnegative, so that $\mu \in \mathbb{R},$ viz.,
\begin{equation}
\left(\frac{1}{4} \phi_x^2 + s_1 \right)^2 - 4 s_2 \geq 0, 
\end{equation}
which is equivalent, for bounded solutions, to
\begin{equation}\label{eqnshGexample}
\phi_x^2 \leq -4 \left(s_1 + 2 \sqrt{s_2}\right).
\end{equation}
For a bounded solution to exist, it is necessary that $\lambda_1<0$ and $\lambda_2 <0.$ Without loss of generality, write $\lambda_1 = -\beta_1^2$ and $\lambda_2=-\beta_2^2,$ where $0 < \beta_1 < \beta_2.$ Then equation~(\ref{eqnshGexample}) implies
\begin{equation}\label{b2b1bound}
 |\phi_x| \leq 2 (\beta_2 - \beta_1),
\end{equation}
 in agreement with the sharp upper bound formula of Theorem~\ref{sharpshG}.

To obtain the density $\phi_x$ in a degenerate limit where two invariant roots coalesce, consider the dynamical equation in equation~(\ref{zeroflowA}),
\begin{equation}
(f_0)_x = g_0-  h_0,
\end{equation}
which becomes
\begin{equation}\label{dyn}
\phi_{xx} = 2( \mu_2 - \mu_1 ).
\end{equation}

For a bounded solution $\lambda_1=-\beta_1^2$and $\lambda_2=-\beta_2^2,$ with $0 < \beta_1 < \beta_2,$ so that $s_1 = -\beta_1^2-\beta_2^2$ and the normalization condition is $s_2 = \beta_1^2 \beta_2^2 = \frac{1}{16}.$ Thus, the degenerate limit $\beta_1 \rightarrow 0$ is singular, because  $\beta_2 = \frac{1}{4\beta_1} \rightarrow \infty.$ 
Equation~(\ref{dyn}) becomes
\begin{equation}\label{dynshG}
\phi_{xx} = \pm 2\sqrt{ \left( \frac{1}{4} \phi_x^2 + s_1\right)^2 - 4 s_2},
\end{equation} 
in which $s_1 = - \beta_1^2 - \frac{1}{16\beta_1^2}.$ 
The sharp upper bound is attained by equilibria solutions of equation~(\ref{dynshG}), viz.,
\begin{equation}\label{shGasyeq}
\phi_x = \pm 2 (\beta_2 - \beta_1) = \pm 2 \left(\frac{1}{4 \beta_1} - \beta_1 \right),
\end{equation}
as well as a non-constant solution that oscillates between these two turning points of equation~(\ref{dynshG}).

The leading-order behavior as $\beta_1 \rightarrow 0$ of the nonequilibrium kink-like density can be recovered by balancing terms in equation~(\ref{dynshG}) with the assumption that
\begin{equation}
\phi_x = O \left(\frac{1}{\beta_1} \right).
\end{equation}
Setting $w=\frac{1}{2} \phi_x$ and balancing the leading order terms in equation~(\ref{dynshG}) as $\beta_1 \rightarrow 0,$ we obtain the approximate equation
\begin{equation}
w_x = \frac{1}{16 \beta_1^2} - w^2,
\end{equation}
whose solution describes the leading-order behavior of the nonequilibrium density in the singular limit $\beta_1 \rightarrow 0,$ viz.,
\begin{equation}\label{phiw}
\phi_x = 2 w \sim \frac{1}{2 \beta_1} \tanh \left( \frac{1}{4 \beta_1} (x-x_0) \right).
\end{equation}
The kink density in equation~(\ref{phiw}) connects (asymptotically) the  leading-order equilibria in equation~(\ref{shGasyeq}), and its  amplitude is consistent with the sharp upper bound formula~(\ref{b2b1bound}),
\begin{equation}
|\phi_x| \leq 2 (\beta_2 - \beta_1) \sim \frac{1}{2 \beta_1}.
\end{equation}

\section*{Data Availability}

No datasets were generated or analyzed during this study.

\end{document}